\newcommand{\C}{\mathbb{C}}
\newcommand{\Co}{\mathcal{C}}
\newcommand{\E}{\mathbb{E}}
\newcommand{\R}{\mathbb{R}}
\newcommand{\Set}{\mathcal{S}}
\newcommand{\N}{\mathcal{N}}
\newcommand{\RP}{\mathcal{R}}
\newcommand{\Log}{\textrm{log}}
\newcommand{\mb}{\mathbf}
\newcommand{\bs}{\boldsymbol}
\newcommand{\Prob}{\textrm{Pr}}
\newcommand{\inner}{\textrm{in}}
\newcommand{\out}{\textrm{out}}
\newcommand{\bad}{\textrm{bad}}
\newcommand{\ind}{\mathbf{1}}
\newtheorem{definition}{Definition}
\newtheorem{lemma}{Lemma}
\newtheorem{theorem}{Theorem}
\newcommand{\argmin}[1]{\underset{#1}{\operatorname{argmin}}} 
\newsavebox{\tempbox}
\title{Time-Varying Space-Only Codes for Coded MIMO}
\author{Dieter Duyck, Sheng Yang, Fambirai Takawira, Joseph J. Boutros, and Marc Moeneclaey
\thanks{D. Duyck and M. Moeneclaey are with Ghent University, St-Pietersnieuwstraat 41, 9000 Gent, Belgium, \{dduyck,mm\}@telin.ugent.be.}%
\thanks{Sheng Yang is  with  Sup\'elec, 3 rue Joliot-Curie, 91192 Gif sur Yvette, France, sheng.yang@supelec.fr}
\thanks{F. Takawira is with the University of Witwatersrand, Private Bag 3, Johannesburg, South Africa, fambirai.takawira@wits.ac.za}
\thanks{Joseph J.  Boutros is  with  Texas A\&M University  at Qatar,  PO  Box 23874  Doha, Qatar,  boutros@tamu.edu}
\thanks{D. Duyck  thanks the University of Kwazulu Natal where he is currently registered as an exchange student and where this work was performed.}
\thanks{This paper will be presented in part at ISCCSP 2012 \cite{duyck2012tvso} and ISIT 2012 \cite{duyck2012tvso2}.}
\thanks{\copyright 2011 IEEE. Personal use of this material is permitted. Permission from IEEE must be obtained for all other uses, in any current or future media, including reprinting/republishing this material for advertising or promotional purposes, creating new collective works, for resale or redistribution to servers or lists, or reuse of any copyrighted component of this work in other works.}
}
\begin{document}
\maketitle

\begin{abstract}
Multiple antenna (MIMO) devices are widely used to increase reliability and information bit rate. Optimal error rate performance (full diversity and large coding gain), for unknown channel state information at the transmitter and for maximal rate, can be achieved by approximately universal space-time codes, but comes at a price of large detection complexity, infeasible for most practical systems. We propose a new coded modulation paradigm: error-correction outer code with space-only but time-varying precoder (as inner code). We refer to the latter as Ergodic Mutual Information (EMI) code. The EMI code achieves the maximal multiplexing gain and full diversity is proved in terms of the outage probability. Contrary to most of the literature, our work is not based on the elegant but difficult classical algebraic MIMO theory. Instead, the relation between MIMO and parallel channels is exploited. The theoretical proof of full diversity is corroborated by means of numerical simulations for many MIMO scenarios, in terms of outage probability and word error rate of LDPC coded systems. The full-diversity and full-rate at low detection complexity comes at a price of a small coding gain loss for outer coding rates close to one, but this loss vanishes with decreasing coding rate.
\end{abstract}

\begin{keywords}
Quasi-static multiple input multiple output channels, outage probability, full-rate and full-diversity space-time coding, signal space diversity, error-correcting codes.
\end{keywords}

\section{Introduction}

Multiple antennas (MIMO) has become an important means to combat channel fading as well as to increase channel capacity. Since the pioneering work \cite{fos1998olo, telatar1999com}, a great amount of research effort has been focused on characterizing the fundamental limits of MIMO communications that are now well understood in many aspects. In particular, when the channel is subject to slow fading, a fundamental tradeoff as to how to optimally exploit the multiple antennas has been characterized in \cite{tse2003dam}. 

Code design problems for multi-antenna channels without channel state information at the transmitter have first been addressed by Tarokh et al. in \cite{tarokh1998stc}, in which the notion of space-time codes was introduced. By spanning the message over both spatial and temporal dimensions, full spatial diversity can be achieved. The elegant construction of orthogonal space-time codes enables a simple decoding at the receiver side \cite{tarokh1998stc, Alamouti1998ast}. Unfortunately, the benefit of simplicity is obtained at the cost of rate; more specifically, at most one independent symbol per channel use can be transmitted. Furthermore, the rate of orthogonal space-time codes decreases with the number of transmit antennas, while the ergodic capacity of a MIMO channel linearly increases with the minimum of the number of receive and transmit antennas. While the highly structured codes 
have a poor spectral efficiency, the random coding argument \cite{tse2003dam, elgamal2004lca} suggests that there exists a space-time code with finite length that can achieve the fundamental diversity-multiplexing tradeoff~(DMT). Guided by the diversity-multiplexing tradeoff, approximately universal \cite{tavildar2006auc}, i.e., DMT optimal for all channel statistics, structured space-time codes designs have been proposed \cite{belfiore2005tgc, oggier2006pstb, elia2006emd, tavildar2006auc}. The key to obtain approximately universal space-time codes with a finite code length is the non-vanishing properties of the product of the $\min\{n_T, {n_R}\}$ smallest singular values of the codeword matrices for increasing signal-to-noise ratio, where $n_T$ and ${n_R}$ are the number of transmit and receive antennas, respectively. In particular, this criterion coincides with the non-vanishing determinant criterion when $n_T={n_R}$ \cite{oggier2006pstb, elia2006emd}. Interestingly enough, when the MIMO channel is diagonal, i.e., a parallel channel, the criterion is reduced to the product-distance criterion, well known for code design for single-antenna block fading channels \cite{boutros1998ssd, bayer2004nac, tse2005fwc}. 

In general, the MIMO channel can be decomposed by the singular value decomposition into a diagonal channel preceded by a random unitary matrix rotating the channel input. Despite the diagonal channel being present in the singular value decomposition, there is a fundamental difference between MIMO and parallel channels when the channel state is unknown at the transmitter side. In fact, by carefully examining the criterion for approximate universality \cite{tavildar2006auc, elia2006emd}, constraints on the codeword structure are imposed in order to guarantee a ``good'' performance for the worst-case rotation incurred by the channel matrix. Such a phenomenon does not exist for diagonal/parallel channels. To guarantee good performance in any MIMO channel, precoding over at least $n_T^2$ dimensions is needed for MIMO, while we only need to code over $n_T$ dimensions for parallel channels. A direct consequence of the increased precoding dimensionality is the decoding complexity. While there exist efficient decoding algorithms for such problems~(see \cite{JaldenElia, natarajan2011gdl} and references therein), the complexity still grows in polynomial time with the dimension for the best-case scenarios and in exponential time for the worst cases.

The advantages of approximately universal space-time code designs are quite clear: high spectral efficiency, high diversity gains even without channel codes, and short block length. On the other hand, error-correction codes with relatively large block lengths are used in most modern communication systems. As a result, a common paradigm of coding for multi-antenna systems is the concatenation of an error correction code~(also referred to as outer code) and a space-time code~(also referred to as space-time precoder, or inner code). That is, a long coded block is split into many sub-blocks that are space-time precoded individually. While the precoder is designed without the consideration of an outer channel code, the actual error performance depends on multiple space-time blocks. Although using a well-designed space-time precoder can only improve the overall error rate performance, it comes at the price of high decoding complexity, as mentioned above. In fact, it turned out in some cases that the use of space-time codes in conjunction with channel codes does not bring extra gain but unnecessary decoding complexity (e.g. when the coding rate of the channel code is sufficiently low to recover a portion of the spatial diversity \cite{gresset2008stc}). Therefore, the conventional paradigm, to design space-time codes without considering the presence of an outer code, may not be efficient when it comes to the trade-off between performance and complexity. 

Inspired by the connection between MIMO and parallel channels (the MIMO channel is a parallel channel with random precoder) as well as the observation that the outer codeword spans over a large amount of channel uses, we propose a new coded modulation paradigm: error correction codes with space-only but time-varying precoder. The rationale behind this is to avoid the worst-case rotations (denoted as corrupt precoders in this paper) that are the main factor of deterioration in uncoded MIMO communication. The time-varying aspect of the space-only precoder reduces the effect of the worst-case rotations to a fraction of the transmission time, which can then be compensated by the outer code. Note that space-only precoders have been considered in the literature when channel state information was available at the transmitter (e.g. through feedback), see \cite{love2005lfu} and references therein.

The main contribution of this paper is as follows. We first observe that the MIMO channel corresponds to a parallel channel with a precoder that is random, but then remains fixed once it is chosen. We then extend the framework of signal space diversity for parallel channels to explain why full transmit diversity is not achieved with fixed space-only codes for MIMO. This new framework allows us to prove Theorem \ref{prop: full diversity mimo with distr rotation}, claiming that time-varying space-only precoders achieve full diversity on any MIMO channel, thereby proposing an alternative to approximately universal space-time codes with a much smaller detection complexity.

The rest of the paper is organized as follows. The system model, notation and problem statement are presented in Sec. \ref{System model} and we also clearly define what we mean by full-rate space-time codes. In Sec. \ref{sec: MIMO vs. parallel channels}, we interpret MIMO as parallel channels with a random precoder and explain that this randomness requires an extension of the study on signal space diversity for parallel channels. We first illustrate this extension by means of a toy example in Sec. \ref{sec: toy example}, where the channel gains, the precoder elements and the transmit symbols are all real-valued. Next, the extension of the signal space diversity framework is formalized in Sec. \ref{sec: Bad and Corrupt precoders for MIMO}, by introducing the concepts of bad and \textit{corrupt} precoders. This new framework is then used to prove that fixed space-only codes do not achieve full diversity (Sec. \ref{sec: FSOC}) in contrary to time-varying space-only precoders (Sec. \ref{sec: TVSOC}). Theorem \ref{prop: full diversity mimo with distr rotation} is then corroborated by presenting extensive numerical results in Sec. \ref{Numerical results}.


\section{System model}
\label{System model}

\textit{Notation:} we write scalars, vectors and matrices as $x$, $\mb{x}$ and $X$. $X^\dagger$ and $\mb{x}^\dagger$ are the Hermitian transposes of $X$ and $\mb{x}$. The Landau symbols $f(n) = O(g(n))$ and $f(n) = \Omega(g(n))$ respectively denote $f(n) \leq k g(n)$ and $f(n) \geq k g(n)$ for some positive $k$. The equation sign $f(\gamma) \doteq g(\gamma)$, introduced in \cite{tse2003dam}, is equivalent to $\lim_{\gamma \rightarrow \infty}\frac{\log f(\gamma)}{ \log \gamma} = \lim_{\gamma \rightarrow \infty} \frac{\log g(\gamma)}{ \log \gamma}$. Similar meanings hold for $\dot{\leq}$ and $\dot{\geq}$.

\subsection{Channel model}
\label{sec: Channel model}

We consider a point-to-point MIMO channel $H = [h_{i,j}] \in \C^{{n_R} \times n_T}$ with $n_T$ transmit antennas and ${n_R}$ receive antennas, where $h_{i,j} \sim \Co\N(0,1)$ is the complex path gain from transmit antenna $j$ to receive antenna $i$. We assume that all path gains are independent. The channel state information is perfectly known at the receiver side, but unknown at the transmitter side, i.e., no feedback channel is available. The channel is assumed to vary slowly, so that it remains constant during the transmission of at least one outer codeword. A channel use is referred to as the event where the transmitter sends $n_T$ symbols simultaneously from its $n_T$ transmit antennas. Assuming a total of $N_c$ channel uses per codeword, the discrete-time complex baseband equivalent channel equation is given by
\vspace{-0.1cm}
\begin{equation}
	\mb{\tilde{y}}_t = \sqrt{\gamma}\,H \mb{\tilde{x}}_t + \mb{\tilde{w}}_t,~~ t=1, \ldots, N_c,
\end{equation}
where $\mb{\tilde{y}}_t, \mb{\tilde{w}}_t \in \C^{{n_R} \times 1}$ denote the received vector and noise vector at instant $t$, and $\mb{\tilde{x}}_t \in \C^{n_T \times 1}$ denotes the symbol vector transmitted at instant $t$. The additive white Gaussian noise vector $\mb{\tilde{w}}_t$ has i.i.d. entries, $\tilde{w}_{t, i} \sim \Co\N(0,1)$. The transmit vector $\mb{\tilde{x}}_t$ satisfies $\E[||\mb{\tilde{x}}_t||^2] = n_T, \forall~ t$. This way, $\gamma$ is the average signal-to-noise ratio per symbol (SNR) per transmit antenna\footnote{In some papers, the average signal-to-noise ratio at each receive antenna is considered, which is $n_T \gamma$.}. The instantaneous mutual information is denoted as $I(\mb{\tilde{x}}_t; \mb{\tilde{y}}_t | H)$ and depends on the channel realization $H$.

The channel realization $H$ can be decomposed by a singular value decomposition as $H = U \Sigma V^\dagger$, where $U \in \C^{{n_R} \times {n_R}}$ and $V \in \C^{n_T \times n_T}$ are unitary matrices, uniformly distributed in the set of all unitary matrices (see App. \ref{app: Haar} for more background), and $\Sigma \in \R^{{n_R} \times n_T}$ is a diagonal matrix with the non-negative singular values $\sigma_i, i=1, \ldots, \min({n_R},n_T),$ of $H$ on its diagonal. As in \cite{tse2003dam, fab2007cmi}, we define the ordered normalized fading gains $\bs{\alpha} = [\alpha_1, \ldots, \alpha_{\min({n_R},n_T)}]$, where $\alpha_i = -\frac{\log \sigma_i^2}{\log \gamma}$, $\alpha_1 \geq \alpha_2 \geq \ldots \geq \alpha_{\min({n_R},n_T)}$. The joint distribution $p(\bs{\alpha})$ is given by \cite{tse2003dam, muirhead}
\begin{align}
 & p(\bs{\alpha}) = K^{-1} (\log{\gamma})^{n_T} \prod_{i=1}^{\min({n_R},n_T)} \gamma^{-(|{n_R}-n_T|+1)\alpha_i} \nonumber \\
 & \prod_{j>i} (\gamma^{-\alpha_i} - \gamma^{-\alpha_j})^2 e^{-\sum_{i=1}^{\min({n_R},n_T)} \gamma^{-\alpha_i}}.
 \label{eq: singular value distribution}
\end{align}

Because $H$ is known at the receiver, the following transformation can be performed,
\begin{equation}
	\mb{y}_t = U^\dagger \mb{\tilde{y}}_t = \sqrt{\gamma}~ \Sigma V^\dagger \mb{\tilde{x}}_t +
        \mb{w}_t,\quad t=1, \ldots, N_c
	\label{eq: new channel eq}
\end{equation}
where $\mb{w}_t$ follows the same distribution as $\mb{\tilde{w}}_t$. Because the transformation matrix $U^\dagger$ is invertible, no information is lost by this transformation (which can be proved by for example the data processing inequality \cite{cover2006eit})
, i.e., $I(\mb{\tilde{x}}_t; \mb{y}_t | \Sigma, V) = I(\mb{\tilde{x}}_t; \mb{\tilde{y}}_t | H)$.

Let us now consider (\ref{eq: new channel eq}) in the cases ${n_R}>n_T$ and ${n_R}<n_T$, where $\Sigma$ is not square. When ${n_R}>n_T$ and $\Sigma$ is tall, the last ${n_R}-n_T$ received symbols in $\mb{y}_t$ contain only noise due to the fact that the bottom ${n_R}-n_T$ rows of $\Sigma$ are zero. Hence, when we consider ${n_R}>n_T$ in the following, then, allowing an abuse in notation, $\mb{y}_t$, $\mb{w}_t$ and $\Sigma$ refer to the top $n_T$ rows of the actual vectors $\mb{y}_t$, $\mb{w}_t$ and $\Sigma$, respectively. As a result, an equivalent $n_T \times n_T$ MIMO channel is obtained, of course taking into account the actual singular value distribution as a function of ${n_R}$ and $n_T$ (see (\ref{eq: singular value distribution})). When ${n_R}<n_T$ and $\Sigma$ is fat, the last $n_T-{n_R}$ columns of $\Sigma$ are zero, so that the last $n_T-{n_R}$ columns of $V$ are not important, which will be considered in Sec. \ref{sec: TVSOC}.

The outage probability \cite{biglieri1998fci, ozarow1994itc} is expressed as
\begin{align}
	P_\out &= \Prob \left( \left[ \lim_{N_c \rightarrow \infty} \frac{1}{N_c} \sum_{t=1}^{N_c} I(\mb{\tilde{x}}_t; \mb{y}_t | \Sigma, V) \right] < R \right) \\
			&= \Prob \left( \E_t \left[ I(\mb{\tilde{x}}_t; \mb{y}_t | \Sigma, V) \right] < R \right),
	\label{eq: outage prob}
\end{align}
where $R$ is the spectral efficiency and $\E_t \left[.\right]$ denotes the temporal mean. The SNR-exponent of the outage probability, known as the diversity order, is 
\begin{equation}
	d_\out = \lim_{\gamma \rightarrow \infty} -\frac{\log P_\out}{\log \gamma},
	\label{eq: div order}
\end{equation}
where we assumed fading distributions where the limit in (\ref{eq: div order}) exists.

\subsection{Full rate space-time and space-only coding}
\label{Full rate space-time and space-only coding}

In this paper, full-rate STCs are considered where we define full-rate codes as schemes that transmit, on average, a linear transformation of $\min(n_T,{n_R})$ independent\footnote{Note that the dependence created by the outer error-correcting code is neglected when we use ``independent'' in this context.} $M$-QAM symbols at each channel use, corresponding to $m$ coded bits per channel use, where $m= \min(n_T,{n_R}) \log_2 M$. Hence, the number of channel uses per codeword is $N_c = \frac{N_b}{m}$, where $N_b$ is the length of a codeword of the outer code. 

Note that it is possible to convey $n_T$ independent symbols per channel use, which is larger than $\min(n_T,{n_R})$ when ${n_R}<n_T$. However, transmitting $\min(n_T,{n_R})$ independent symbols maximizes the multiplexing gain (the speed at which the constellation size may grow with the SNR $\gamma$) that can be achieved without leading to a degradation of the error rate for increasing SNR \cite{tse2003dam}. Secondly, the capacity of an $n_T \times {n_R}$ MIMO channel is the same as the capacity of an ${n_R} \times n_T$ MIMO channel (hence with ${n_R}$ transmit antennas and $n_T$ receive antennas) \cite{winters1987otc, fos1996lst, telatar1999com}, so that transmitting more symbols on the $n_T \times {n_R}$ MIMO channel than the maximum that can be conveyed on the ${n_R} \times n_T$ MIMO channel would not correspond to the limits of the channel, and thus lead to a significant loss in coding gain\footnote{Consider the simple case of a $1 \times 1$ channel where transmitting a linear combination of two independent symbols each channel use would also lead to a large loss in coding gain.}.

STCs are characterized by the number of space and time dimensions, $n_T$ and $k$, respectively, denoted as an $n_T \times k$ STC. When the number of time dimensions $k$ reduces to one, the STC becomes a space-only code. An $n_T \times k$ STC (consisting of $n_T k$ elements) yields $k$ symbol vectors $\{\mb{\tilde{x}}_{(l-1)k+1}, \ldots, \mb{\tilde{x}}_{(l-1)k+k}\}$, often represented through an $n_T \times k$ STC matrix $X_l$, where $X_{l,u,i}=\tilde{x}_{(l-1)k+u,i}, i=1, \ldots, n_T; u=1, \ldots, k; l=1, \ldots, N_c/k$. 

We first consider the case that ${n_R} \geq n_T$. The STC $X_l$ is obtained from $m k$ coded bits through a sequence of operations. The sequence of $m k$ coded bits is split in $n_T k$ groups of $m/n_T$ bits which are mapped to one of $M$ points, $M = 2^{m/n_T}$, belonging to an $M$-QAM constellation $\Omega_z$. Denoting as $\mb{z}_l$ the $n_T k$-dimensional symbol vector that results from mapping the $l$-th block of $m k$ coded bits, the linear precoding involves
\begin{equation}
\label{eq: precoding}
	\mb{\tilde{x}}_{\textrm{vec},l}  = P_l \mb{z}_l, ~ l=1, \ldots, N_c/k,
\end{equation}
where $P_l$ is a well chosen unitary matrix in $\C^{nk \times nk}$ and $\E[|z_{l,j}|^2] = 1, j=1, \ldots, n_T k$. Next, the $n_T k$-dimensional vector $\mb{\tilde{x}}_{\textrm{vec},l}$ is split in $k$ column vectors of length $n_T$, yielding the STC matrix $X_l$ or the $k$ vectors $\{\mb{\tilde{x}}_{(l-1)k+1}, \ldots, \mb{\tilde{x}}_{(l-1)k+k}\}$. Because $\E[|z_{l,j}|^2] = 1$ and $P_l$ is unitary, the components $\tilde{x}_{t,i}$ satisfy $\E[|\tilde{x}_{t, i}|^2] = 1, \forall~ i, t$. 

In the case that ${n_R} < n_T$, full-rate codes transmit, on average, a linear transformation of ${n_R}$ independent $M$-QAM symbols each channel use. Hence, $\mb{z}_l$ only contains ${n_R} k$ independent components. For example, $(n_T-{n_R})k$ components of $\mb{z}_l$ can be put to zero. To satisfy the constraint that $\E[||\mb{\tilde{x}}_t||^2] =n_T, \forall~t$, we set the mean square magnitude of the ${n_R} k$ non-zero components of $\mb{z}_l$ equal to $\frac{n_T}{{n_R}}$. 

\subsection{Problem formulation}

The approximately universal STCs that are optimal in terms of uncoded error rate for an $n_T \times {n_R}$ MIMO channel are $n_T \times n_T$ full-rate full-diversity STCs (hence $k=n_T$). The precoder $P_l = P$ is constant, has dimension $n_T^2 \times n_T^2$ and its $n_T^4$ elements are optimized to maximize the coding gain (see for example \cite{belfiore2005tgc,larsson2003stb,oestges2007mwc,tarokh1998stc,gresset2008stc,boutros2009tap}). A prohibitive objection is that the detection complexity of optimal STCs is very complex, e.g., it increases exponentially with $n_T^2$ for exhaustive ML-detection. For example, when $16-$QAM and $n_T=3$ transmit antennas are used, the detection complexity is $O(16^{9})$. Although a recent study shows that the complexity of near-ML decoding can be reduced to growing in polynomial time with the dimension for the best-case scenarios, by using lattice reduction and linear preprocessing \cite{JaldenElia}, it is yet to be verified whether the same conclusion holds with error correcting codes and soft decoding. 

We propose full-rate time-varying space-only codes, hence $k=1$ so that $P_l$ only has dimension $n_T \times n_T$, but $P_l$ varies with $l$. In Sec. \ref{sec: TVSOC}, we prove that full diversity is achieved. Because the STC is space-only, the notation simplifies: $\mb{\tilde{x}}_t = P_t \mb{z}_t$, $t=1, \ldots, N_c$, where $z_{t,i} \in \Omega_z$, $i=1, \ldots, \min(n_T,{n_R})$. Hence, $\mb{z}_t$ belongs to $\Omega_{\mb{z}} = (\Omega_z)^{n_T}$, which is the Cartesian product of $n_T$ constellations $\Omega_z$.
We denote our new precoder type by the EMI code, where EMI (\textit{Ergodic Mutual Information}) refers to the temporal mean of the mutual information. The detection complexity of exhaustive ML-detection now increases exponentially with $\min(n_T,{n_R})$ which is the lowest possible ML detection complexity for full-rate STCs. For example, when $\Omega_z =16-$QAM and $n_T={n_R}=3$, the detection complexity is $O(16^{3})$ for exhaustive ML-decoding, which is feasible.

The overall received signal-to-noise ratio per information bit is denoted as $\frac{E_b}{N_0}$, where $E_b$ is the ratio of the energy of the received symbol vector and the number of information bits conveyed per transmitted symbol vector, so that $\frac{E_s}{N_0} = \gamma = \frac{R E_b/N_0}{\E[||\mb{\tilde{x}}_t||^2] {n_R}}$, where $R = m R_c$ is the spectral efficiency and $R_c$ is the coding rate. When the constraint $\E[||\mb{\tilde{x}}_t||^2] =n_T, \forall~t$, is satisfied, then the SNR is $\gamma = \frac{R E_b/N_0}{n_T {n_R}}$.

\section{MIMO vs. parallel channels}
\label{sec: MIMO vs. parallel channels}

The channel equation (\ref{eq: new channel eq}) is that of a parallel channel $\Sigma$ (which corresponds to $\min(n_T,{n_R})$ parallel channels), with precoded input $V^\dagger \mb{\tilde{x}}_t$, with the particular feature that the precoder $V^\dagger$ is random. We denote the input of this parallel channel by 
\begin{equation}
\mb{x}_t = V^\dagger \mb{\tilde{x}}_t, 
\end{equation}
which belongs to a discrete constellation that is random (through $V$) and variable in time (through the time-varying precoder $P_t$). Allowing a small abuse in notation (by dropping the time-index), we denote this constellation by $\Omega_\mb{x}$, so that $\mb{x}_t \in \Omega_\mb{x}$. It is clear that $\mb{x}_t$ is a linear transformation of $\mb{z}_t$, 
\begin{equation}
	\mb{x}_t = V^\dagger P_t \mb{z}_t = V_t \mb{z}_t,
\end{equation}
where the precoder $V_t = V^\dagger P_t$ is random and varies in time within the duration of a codeword if $P_t$ is time-varying. When $P_t$ is deterministic or uniformly distributed in the set of all unitary matrices (denoted as $\mathcal{M}(n_T,n_T)$) and independent from $V$, then $V_t$ is also uniformly distributed in $\mathcal{M}(n_T,n_T)$ (see Lemma \ref{lemma: U properties} in App. \ref{app: Haar}). Hence, in the case that a fixed space-only precoder $P_t = P$ is used, the MIMO channel corresponds to a parallel channel with a random precoder $V^\dagger P$, and when our proposed \textit{time-varying} space-only precoder $P_t$ is used, the MIMO channel corresponds to a parallel channel with a random time-varying precoder $V_t$. 

For a given channel realization and a given channel use, $\Omega_\mb{x}$ is fixed and we can determine the mutual information between input and output of this parallel channel, $I(\mb{x}_t; \mb{y}_t | \Sigma, V) = I(\mb{\tilde{x}}_t; \bs{y}_t | H)$. Inserting $\mb{x}_t$ in Eq. (\ref{eq: new channel eq}), we have that 
\begin{equation}
	\mb{y}_t = \sqrt{\gamma}~ \Sigma \mb{x}_t + \mb{w}_t ,\quad t=1, \ldots, N_c,
	\label{eq: new channel eq 2}
\end{equation}
Note that only the top $\min(n_T,{n_R})$ components of the column vector $\mb{x}_t$ are important, by the structure of $\Sigma$; hence, the complete vector $\mb{x}_t$ is considered when ${n_R} \geq n_T$, and the top ${n_R}$ components of $\mb{x}_t$ otherwise. 

The mutual information $I(\mb{x}_t; \mb{y}_t | \Sigma, V)$ between input and output of a parallel channel is well known \cite{ungerboeck1982ccm, fab2007cmi} and recalled in Eq. (\ref{eq: mut info discrete alphabet MIMO form 1}), where $d^2(\mb{c}, \mb{d}) =\sum_{i=1}^{n_T} |c_i - d_i|^2 $. Using the normalized fading gains, the mutual information can be expressed as in Eq. (\ref{eq: mut info discrete alphabet MIMO}), where
\begin{equation}
f(\alpha_i, s_{t,i}, w_{t,i}) = e^{-\gamma^{1-\alpha_i} |s_{t,i}|^2 - 2 \sqrt{\gamma^{1-\alpha_i}} \RP\{w_{t,i} s_{t,i}^*\}},
\end{equation}
and where $s_{t,i} = (x_{t,i} - x_{t,i}^\prime)$ and $\RP\{.\}$ takes the real part. 
\begin{figure*}
\begin{align}
	I\left(\mb{x}_t; \mb{y}_t| \Sigma, V \right) &= m - 2^{-m} \sum_{\mb{x}_t \in \Omega_\mb{x}} \E_{\mb{y}_t|\mb{x}_t} \left[  \Log_2 \left( \sum_{\mb{x}_t^\prime \in \Omega_\mb{x}} \exp\left[ \left( d^2(\mb{y}_t,\sqrt{\gamma} \Sigma \mb{x}_t) - d^2(\mb{y}_t, \sqrt{\gamma}\Sigma \mb{x}_t^\prime) \right) \right] \right) \right] \label{eq: mut info discrete alphabet MIMO form 1} \\
	&=  m - 2^{-m} \sum_{\mb{x}_t \in \Omega_\mb{x}}  \E_{\mb{w}_t} \left[ \Log_2 \left( \sum_{\mb{x}_t^\prime \in \Omega_\mb{x}} \prod_{i=1}^{n_T} f(\alpha_i, s_{t,i}, w_{t,i}) \right) \right] \label{eq: mut info discrete alphabet MIMO}
\end{align}
\end{figure*}

The parallel channel model with precoding is well known for the study of signal space diversity (SSD) (see \cite{boutros1998ssd} for uncoded and \cite{fab2007mcm, duy2011pfo} for coded communication over parallel channels). In SSD, full transmit diversity is achieved when $V_t$ is chosen so that $s_{t,i} \neq 0, i=1, \ldots, \min(n_T,{n_R}), \forall~t$, where $\mb{s}_t = \mb{x}_t^\prime-\mb{x}_t; ~\mb{x}_t^\prime=V_t \mb{z}^\prime, \mb{x}_t=V_t \mb{z}; \mb{z}, \mb{z}^\prime  \in \Omega_\mb{z}$, $\mb{z}^\prime \neq \mb{z}$. Mostly, the fading gain distribution that is considered for parallel channels is Rayleigh fading, which yields a maximum transmit diversity of $\min(n_T,{n_R})$. However, the fading gain distribution of the singular values in $\Sigma$ is not Rayleigh \cite{muirhead, telatar1999com}, yielding a maximum transmit diversity of $nr$, which will be made more formal in Sec. \ref{sec: TVSOC}.

Consider a parallel channel with a constant deterministic precoder, $V_t=V_{\textrm{co}}$. For any coding rate $R_c$, it holds that full transmit diversity is not achieved when $V_{\textrm{co}}$ is a \textit{bad} precoder.
\begin{definition}
\label{def: bad precoders}
	We define \textit{bad} precoders $V_\bad$ as the set of precoders so that $\exists~i \in \{1,\ldots,\min(n_T,{n_R})\}, \mb{z}, \mb{z}^\prime \neq \mb{z}$, satisfying $s_{t,i}=0$.
\end{definition}
More importantly, if $V_{\textrm{co}}$ is not a \textit{bad} precoder, full transmit diversity is achieved (see \cite{boutros1998ssd} for more background). This is well known but it is particularly interesting for the following reason. 

Consider a constant space-only precoder $P_t = P$, so that $V_t = V^\dagger P = V^\prime$ is \textit{random} but not time-varying within the duration of a codeword. By Lemma \ref{lemma: U properties} in App. \ref{app: Haar}, the distributions of $V^\prime$ and $V$ are the same, hence the distributions of $H$ and $HP$ are the same. As a consequence, the space-only precoder $P$ achieves the same error rate performance as for uncoded MIMO (without precoding, or with $P=I$), which only achieves a diversity order of ${n_R}$ (which is proved in Sec. \ref{sec: FSOC}). Contrary to parallel channels, the loss of transmit diversity is not caused by bad precoders, which have a zero probability of occurrence due to the continuous space of the realizations of $V^\prime$.

Hence, the notion of bad precoders, established in the SSD framework for parallel channels with constant deterministic precoders, needs to be extended to the case of random precoders, to explain this loss of transmit diversity (also denoted as spatial diversity). This extension will be exemplified through a toy example (in which the constellation $\Omega_z$ and the precoder are taken real valued, allowing a geometrical illustration) in Sec. \ref{sec: toy example}, and will be formalized in Sec. \ref{sec: Bad and Corrupt precoders for MIMO} for MIMO. 

\section{Toy Example}
\label{sec: toy example}

For the toy example, we consider a classical system model with two flat non-ergodic parallel channels with Rayleigh fading and BPSK symbols at the precoder input (see for example \cite{boutros1998ssd, duy2011pfw}), which allows a geometric illustration. The system model is 
\begin{equation}
	\left\{
	\begin{array}{l}
		y_{t,1} = \sqrt{\gamma} \beta_1 x_{t,1} + w_{t,1} \\
		y_{t,2} = \sqrt{\gamma} \beta_2 x_{t,2} + w_{t,2} \\
	\end{array}
	\right., ~~t=1, \ldots, N_c,
	\label{eq: toy example channel eq}
\end{equation}
where $\beta_1$ and $\beta_2$ are i.i.d. and Rayleigh distributed, $\E[\beta_i^2]=1$, $[x_{t,1} ~ x_{t,2}]^T = Q [z_{t,1} ~ z_{t,2}]^T$ where $Q$ is a standard two-dimensional rotation matrix parametrized by the angle $\theta$ and $z_{t,i} \in \{ \pm 1\}$, and where $w_{t,i} \sim \N(0,0.5)$. Because each component of $\mb{x}_t$ is transmitted on another fading gain, this scheme is also denoted as component interleaving. The rotation of $\mb{z}$ is illustrated in Fig. \ref{fig: system model toy example}.
\begin{figure}
	\centering
	\includegraphics[width=0.4 \textwidth]{./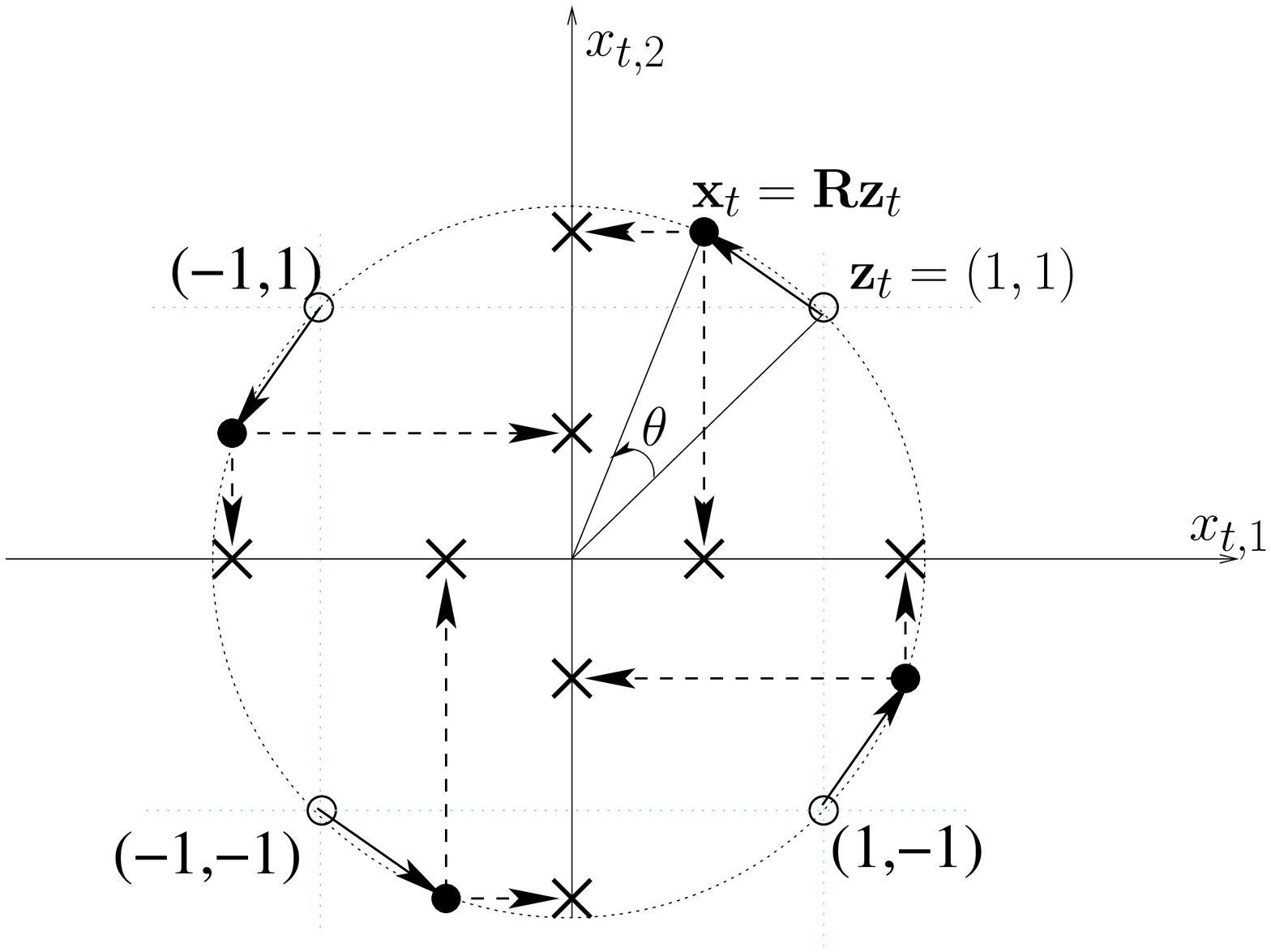}
	\caption{The system model of the toy example is illustrated. The coded bits are first mapped to a point in the constellation $\Omega_{\mb{z}}$ (empty circles), which is transformed to a point in the constellation $\Omega_{\mb{x}}$ (filled circles). The projections of the points of $\Omega_{\mb{x}}$ on the coordinate axes (marked by crosses) illustrates the component interleaving. The first and second component of $\mb{x}_t$, $x_{t,1}$ and $x_{t,2}$, $\forall t$, are affected by the fading gains $\beta_1$ and $\beta_2$, respectively.}
	\label{fig: system model toy example}
\end{figure}
According to Def. \ref{def: bad precoders}, bad precoders are rotation matrices $Q$ with rotation angle $\theta \in \theta_{\textrm{bad}} = \{k \pi/4, k=0, \ldots, 7\}$, in which case some crosses in Fig. \ref{fig: system model toy example} coincide. 

In the literature, only deterministic rotations were studied. Here, we study two other cases:
\begin{itemize}
	\item $\theta$ is random, but fixed once it is chosen (Sec. \ref{Random rotation});
	\item $\theta$ is random and independently generated at each channel use (Sec. \ref{Random time-varying rotation}).
\end{itemize}
The first case is similar to a non-ergodic MIMO channel, where the channel is random but remains constant during the transmission of an outer codeword. The second case is similar to a non-ergodic MIMO channel with a time-varying space-only precoder $P_t$ at its input.

\subsection{Random rotation}
\label{Random rotation}

It is well known \cite{boutros1998ssd, fab2007mcm, duy2011pfo} that a diversity order of two (denoted as full diversity in this section) is achieved for any coding rate $R_c \leq 1$ (including uncoded communication) when a fixed deterministic rotation matrix $Q$ is used and $s_{t,i}\neq 0, \forall~ t,i, \forall~ \mb{z}^\prime \neq \mb{z}$, where $\mb{s}_t = \mb{x}_t - \mb{x}_t^\prime$. In the context of MIMO, it is more interesting to consider a random rotation matrix $Q$. Despite the fact that $\Prob(\theta \in \theta_{\textrm{bad}})=0$ when $p(\theta)=\frac{1}{2 \pi}$, full diversity is not achieved as proved by the following lemma. 
\begin{lemma}
	In a point-to-point flat parallel Rayleigh fading channel as given in (\ref{eq: toy example channel eq}) with a fixed but random precoder $Q$ where $\theta$ follows a uniform distribution in $[0, 2 \pi]$, the diversity order can not be larger than $1.5$ for any coding rate $0.5 < R_c < 1$.
	\label{lemma: random prec toy ex}
\end{lemma}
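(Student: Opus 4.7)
The strategy is to exhibit, for $\gamma$ large, an event of probability $\dot{\geq}\gamma^{-3/2}$ on which the mutual information in (\ref{eq: toy example channel eq}) falls below $R=2R_c$. Since the channel gains, the rotation and the noise statistics are frozen within a codeword, the outage event in (\ref{eq: outage prob}) coincides with $\{I(\mb{z};\mb{y}\,|\,\beta_1,\beta_2,\theta)<R\}$, so such a probability bound immediately gives $d_\out\le 3/2$.

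The first step is a chain-rule upper bound for the mutual information in terms of two scalar BPSK capacities. Using $z_1\perp z_2$,
\begin{equation*}
I(\mb{z};\mb{y}) \;=\; I(z_1;\mb{y}) + I(z_2;\mb{y}\,|\,z_1) \;\le\; A + B,
\end{equation*}
where $A := I(z_1;\mb{y}\,|\,z_2)$ and $B := I(z_2;\mb{y}\,|\,z_1)$; the inequality uses that conditioning on an independent variable cannot decrease mutual information. Conditioning on $z_2$ in (\ref{eq: toy example channel eq}) leaves a BPSK $z_1$ transmitted along the vector $(\beta_1\cos\theta,\beta_2\sin\theta)$ in i.i.d.\ Gaussian noise, and matched filtering gives $A=f\bigl(\gamma(\beta_1^2\cos^2\theta+\beta_2^2\sin^2\theta)\bigr)$, where $f$ denotes the scalar BPSK capacity (continuous, strictly increasing on $[0,\infty)$, with $f(0)=0$ and $f(\infty)=1$); by symmetry, $B=f\bigl(\gamma(\beta_1^2\sin^2\theta+\beta_2^2\cos^2\theta)\bigr)$. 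In particular $A,B\le 1$.

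The second step introduces the outage-forcing event. Since $R_c>1/2$, the slack $\eta:=2R_c-1$ is strictly positive, so continuity of $f$ at the origin yields $T=T(R_c)>0$ with $f(2T)<\eta$. Set
\begin{equation*}
\mathcal{E}_\gamma \;:=\; \{\gamma\beta_2^2<T\}\,\cap\,\{\gamma\beta_1^2\sin^2\theta<T\}.
\end{equation*}
On $\mathcal{E}_\gamma$ the argument of $B$ is below $2T$, so $B<\eta$, whence $I\le A+B<1+\eta=R$; thus $\mathcal{E}_\gamma$ forces outage. Using independence of $\beta_1,\beta_2,\theta$,
\begin{equation*}
\Prob(\mathcal{E}_\gamma) \;=\; \Prob(\gamma\beta_2^2<T)\cdot\Prob(\gamma\beta_1^2\sin^2\theta<T).
\end{equation*}
The first factor equals $1-e^{-T/\gamma}\doteq\gamma^{-1}$. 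For the second, restrict $\beta_1^2$ to a fixed interval $[a,b]$ with $0<a<b<\infty$ of positive probability; on this event $\{\gamma\beta_1^2\sin^2\theta<T\}\supset\{|\sin\theta|<\sqrt{T/(\gamma b)}\}$, and the uniform $\theta$-measure of $\{|\sin\theta|<\epsilon\}$ in $[0,2\pi]$ is $\Theta(\epsilon)$ (two short intervals near $\theta\in\{0,\pi\}$), so this factor is $\dot{\geq}\gamma^{-1/2}$.

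Combining these bounds yields $P_\out\ge\Prob(\mathcal{E}_\gamma)\dot{\geq}\gamma^{-3/2}$, i.e.\ $d_\out\le 1.5$. The only delicate point is the last probability estimate: the threshold $\sqrt{T/(\gamma\beta_1^2)}$ diverges as $\beta_1\to 0$, and this singularity must be avoided by restricting $\beta_1$ to an interval bounded away from zero, after which the $\Theta(\epsilon)$ scaling of $\{|\sin\theta|<\epsilon\}$ gives the clean $\gamma^{-1/2}$ rate. Everything else follows from the chain rule and the vanishing of $f$ at the origin.
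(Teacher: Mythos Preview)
Your argument is correct and constitutes a genuinely different route from the paper's proof. The paper conditions first on the ``corrupt'' set $\{\theta\in[0,\gamma^{-1/2}]\}$, then applies the explicit mutual-information expression (\ref{eq: mut info discrete alphabet toy example}) together with the normalized fading gains $\alpha_i=-\log\beta_i^2/\log\gamma$, the dominated convergence theorem, and the infimum technique of \cite{tse2003dam,fab2007cmi} to show that one $\epsilon$-bad fading gain forces the mutual information below $1$; the two exponents $\gamma^{-1/2}$ (corrupt $\theta$) and $\gamma^{-1}$ (one bad gain) are then combined. Your proof replaces all of this by a two-line genie bound $I\le A+B$ via the chain rule, reducing the problem to two scalar BPSK capacities whose arguments are explicit in $(\beta_1,\beta_2,\theta)$; the outage-forcing event $\mathcal{E}_\gamma$ is then read off directly and its probability computed with elementary Rayleigh and uniform estimates. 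Your approach is shorter and avoids the asymptotic machinery, while the paper's approach is pedagogically aligned with the corrupt-precoder framework developed in Sections~\ref{sec: Bad and Corrupt precoders for MIMO}--\ref{sec: FSOC} for the general MIMO case, where the chain-rule decomposition into scalar channels is no longer available and one must work with the full expression (\ref{eq: mut info discrete alphabet MIMO}). Two minor remarks: the noise variance in (\ref{eq: toy example channel eq}) is $1/2$, so the BPSK SNRs in $A,B$ carry an extra factor of $2$, which is irrelevant for the exponent; and your closing comment about the ``singularity'' at $\beta_1\to 0$ is slightly misphrased---small $\beta_1$ only \emph{enlarges} the event $\{\gamma\beta_1^2\sin^2\theta<T\}$, so the restriction to $[a,b]$ is needed purely to get a uniform upper bound $\beta_1^2\le b$ and extract the clean $\gamma^{-1/2}$ rate.
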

\begin{proof}
	See App. \ref{sec: proof lemma toy example}.
\end{proof} 

From the proof of Lemma \ref{lemma: random prec toy ex}, it is clear that \textit{corrupt} precoders (and not \textit{bad} precoders) are the main cause of the full diversity loss. We formally define corrupt precoders in Sec. \ref{sec: Bad and Corrupt precoders for MIMO}, but for this section, consider corrupt precoders as the set of rotations where $\theta \in [0,\gamma^{-0.5}]$ \footnote{Note that corrupt precoders may also be defined as the set of rotations where $|\theta-k \frac{\pi}{4}| \leq  \gamma^{-0.5}$ for $k=1,\ldots,7$; but then, for $k=1,3,5$ and $7$, the lower bound on the coding rate yielding a diversity order of $1.5$ is $\frac{\log_2(3)}{2}$.}. The probability to have such a corrupt precoder is proportional to $\gamma^{-0.5}$. When a precoder is corrupt, then the mutual information is strictly smaller than one in the event of one bad fading gain (see App. \ref{sec: proof lemma toy example} for a formal definition of bad fading). In this case, a coding rate larger than one-half leads to a spectral efficiency that is larger than one, yielding an outage event. The probability of a bad fading gain, which yields an outage event in conjunction with a corrupt precoder, is proportional to $\gamma^{-1}$. The probability of corrupt precoders is unfortunately non-negligible. Lemma \ref{lemma: random prec toy ex} is corroborated by means of numerical simulations, presented in Fig. \ref{fig: outage toy example}. 
\begin{figure}
	\centering
	\includegraphics[width=0.46 \textwidth]{./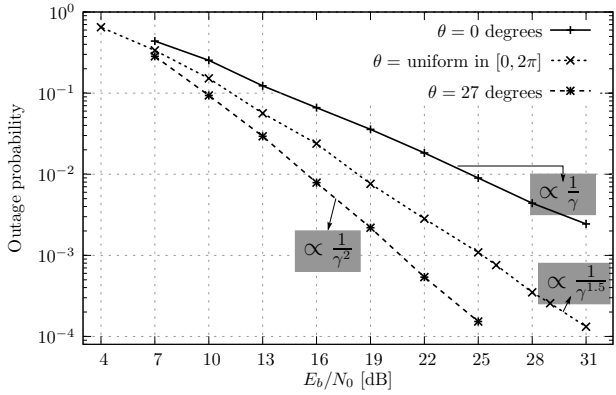}
	\caption{The outage probability of a parallel channel with two states does not achieve full diversity when the input constellation is rotated with a rotation angle that is random (uniform in $[0, 2 \pi]$) but fixed once it is chosen. For reference, the outage probabilities with a fixed rotation (with $\theta=0$ and $27$ degrees, corresponding to a bad and good rotation) are given. Other simulation parameters: $\Omega_z=$BPSK and $R_c=0.9$.}
	\label{fig: outage toy example} 
\end{figure}
When the rotation angle is zero, we have a bad precoder and no transmit diversity is achieved. When the rotation angle is constant and different from $\{k \pi/4\}$, full transmit diversity is achieved. However, when the rotation angle is random, full transmit diversity is not achieved.

\subsection{Random time-varying rotation}
\label{Random time-varying rotation}

The effect of corrupt precoders can be easily reduced by using multiple random precoders, say $N$, during the transmission of one codeword. We leave a formal description for Sec. \ref{sec: TVSOC} but qualitatively describe what happens in the case of the toy example. The probability to have one bad fading gain behaves as $\frac{1}{\gamma}$ (the probability of having two bad fading gains is not dominant, as it behaves as $\frac{1}{\gamma^2}$ and thus automatically leads to full diversity). Given that one fading gain is bad, the mutual information at high SNR converges to one and two for corrupt and non-corrupt precoders, respectively. The mutual information is averaged over these $N$ precoders. We have to consider at most one corrupt precoder as the probability to have a corrupt precoders behaves as $\gamma^{-0.5}$, so that two corrupt precoders in conjunction with a bad fading gain also automatically yield full diversity. In that case, the mutual information for large SNR, averaged over $N$ precoders, converges to $2\frac{N-1}{N}+\frac{1}{N} = 2(1-\frac{1}{2N})$, so that any coding rate smaller than $1-\frac{1}{2 N}$ yields full diversity. 

A similar reasoning is valid for MIMO, which is formalized in the next sections.

\section{Bad and Corrupt precoders for MIMO}
\label{sec: Bad and Corrupt precoders for MIMO}

The definition of bad precoders was given in Def. \ref{def: bad precoders}. Bad precoders decrease the maximal mutual information that can be achieved. Listing the bad precoders is laborious and is fortunately not required for the theoretical analysis of fixed and time-varying space-only codes in the remainder of the paper. For completeness, we illustrate the bad precoders for $\Omega_z = 4-$QAM and a $2 \times 2$ MIMO channel in App. \ref{app: bad precoders for QPSK}.

Bad precoders are defined to yield at least one component of $\mb{s}_t$ with a zero magnitude. The following lemma proves that at most $n_T-1$ components of $\mb{s}_t$ can have a zero magnitude.
\begin{lemma}
	 At most $n_T-1$ elements of $\mb{s}_t$ can have a magnitude equal to zero.
	 \label{lemma: no $n_T$ zero elements}
\end{lemma}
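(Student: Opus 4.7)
The plan is to exploit the fact that $\mb{s}_t = V_t(\mb{z}-\mb{z}^\prime)$, where $V_t = V^\dagger P_t$ is the product of two unitary matrices (see the discussion just before Eq.~(\ref{eq: new channel eq 2})), hence is itself unitary and in particular invertible. Under the standing convention that $\mb{z}^\prime \neq \mb{z}$, the vector $\mb{z}-\mb{z}^\prime \in \C^{n_T}$ is nonzero.

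I would then argue by contradiction. Suppose all $n_T$ components of $\mb{s}_t$ are zero, i.e., $\mb{s}_t = \mathbf{0}$. Left-multiplying by $V_t^\dagger$ yields
\begin{equation*}
\mb{z} - \mb{z}^\prime \;=\; V_t^\dagger V_t (\mb{z}-\mb{z}^\prime) \;=\; V_t^\dagger \mb{s}_t \;=\; \mathbf{0},
\end{equation*}
which contradicts $\mb{z}^\prime \neq \mb{z}$. Equivalently and more transparently, unitarity of $V_t$ preserves the Euclidean norm, so $\|\mb{s}_t\|=\|\mb{z}-\mb{z}^\prime\|>0$, and at least one entry of $\mb{s}_t$ must be nonzero. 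Hence at most $n_T-1$ entries can have zero magnitude.

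The only subtlety worth flagging is the case ${n_R}<n_T$ discussed in Sec.~\ref{sec: Channel model}, where the last $n_T-{n_R}$ columns of $V$ (and symbols $z_{t,i}$) are effectively irrelevant to the channel output; one might worry that the relevant ``precoder'' is no longer a genuine unitary map. However, the lemma is stated for the full $n_T$-dimensional vector $\mb{s}_t$, and $V_t$ itself remains a full $n_T\times n_T$ unitary matrix, so the invertibility argument applies verbatim. I do not expect any real obstacle here: the claim is essentially a restatement of the rank-nullity principle for invertible maps, and the proof is a one-liner once the identification $\mb{s}_t = V_t(\mb{z}-\mb{z}^\prime)$ is made explicit.
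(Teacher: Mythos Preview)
Your proposal is correct and takes essentially the same approach as the paper: both argue by contradiction that if all $n_T$ entries of $\mb{s}_t$ vanish then $\mb{x}_t=\mb{x}_t^\prime$ (equivalently $\mb{z}=\mb{z}^\prime$), which is impossible since $V_t$ is a bijection. Your added remark on norm preservation and the ${n_R}<n_T$ case is fine but not needed for the argument.
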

\begin{proof}
	If $n_T$ elements of $\mb{s}_t$ have a zero magnitude, then $\exists~ \mb{z}, \mb{z}^\prime \neq \mb{z}$, satisfying $\mb{x}_t = \mb{x}_t^\prime$, which is impossible, because $V_t$ is a bijection (see Lemma \ref{lemma: U properties} in App. \ref{app: Haar}). 
\end{proof}

As illustrated in the toy example in Sec. \ref{sec: toy example}, the loss of transmit diversity for space-only codes is caused by corrupt precoders, which is an uncountable set of precoders that behave similarly as bad precoders, in the sense that a number of components of $\mb{s}_t$ have a vanishing magnitude with the SNR, instead of being zero for bad precoders. We distinguish between two sets of precoders, $\Set_{c,1}$ and $\Set_{c,2}$, which we both refer to as \textit{corrupt} precoders. The first set is relevant in the case where ${n_R} \geq n_T$. 
\begin{definition}
	We define the set $\Set_{c,1}$ of \textit{corrupt} precoders as the set of precoders so that $\exists~\mb{z}, \mb{z}^\prime \neq \mb{z}$, satisfying $|s_{t,i}| \leq \gamma^{-0.5}, \forall~i>1$. 
	\label{def: corrupt precoders S_c,1}
\end{definition}
For the case that ${n_R}<n_T$, we define corrupt precoders as follows.
\begin{definition}
	We define the set $\Set_{c,2}$ of \textit{corrupt} precoders as the set of precoders so that $\exists~\mb{z}, \mb{z}^{\prime} \neq \mb{z}$, satisfying $|s_{t,i}| \leq \gamma^{-0.5}, i=1, \ldots, {n_R}$. 
	\label{def: corrupt precoders S_c,2}
\end{definition}

Even without determining the structure of these precoders, we can derive their probability of occurrence.
\begin{lemma}
	The probability that a random precoder falls in the set of corrupt precoders is 
\begin{equation}
	\Prob(\Set_{c,1}) \dot{=} \gamma^{-(n_T-1)}, ~~~ \Prob(\Set_{c,2}) \dot{=} \gamma^{-{n_R}}.
\end{equation}	
	for the sets $\Set_{c,1}$ and $\Set_{c,2}$, respectively.
	\label{lemma: Prob Corrupt Precoder}
\end{lemma}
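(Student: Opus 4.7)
The strategy is to (i) fix a single pair $\mb{z}\neq\mb{z}^\prime$, (ii) reduce the computation to the probability that a Haar-uniform unit vector in $\C^{n_T}$ has several prescribed components of magnitude at most $\gamma^{-1/2}$, and (iii) glue the single-pair estimate to a union bound over the finite constellation. Set $\mb{u}=\mb{z}^\prime-\mb{z}\in\C^{n_T}\setminus\{0\}$, so that $\mb{s}_t=V_t\mb{u}$. By Lemma~\ref{lemma: U properties} of App.~\ref{app: Haar} the precoder $V_t$ is Haar-distributed on the unitary group, and left-invariance of Haar measure forces $V_t\mb{u}$ to be uniformly distributed on the complex sphere of radius $\|\mb{u}\|$. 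Since $\Omega_{\mb{z}}$ is a fixed finite constellation, $\|\mb{u}\|$ is bounded above and bounded away from zero by constants independent of $\gamma$, and only the $\gamma$-scaling is relevant for the $\doteq$ asymptotics.

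Next I would invoke the marginal density of the complex sphere: if $\mb{v}$ is uniform on the unit sphere of $\C^{n_T}$ and $k<n_T$, then the joint density of any $k$ of its components is proportional to $(1-\sum_{i=1}^k|v_i|^2)^{n_T-k-1}$ on its support and is therefore continuous, positive and bounded on a neighbourhood of the origin. The area of a complex disk of radius $\gamma^{-1/2}$ is $\pi\gamma^{-1}$, so integrating this bounded density over the product of $k$ such disks gives a probability that is $\Theta(\gamma^{-k})$, equivalently $\doteq\gamma^{-k}$. Applied with $k=n_T-1$ (the indices $i>1$ in Def.~\ref{def: corrupt precoders S_c,1}) this yields $\doteq\gamma^{-(n_T-1)}$ for the single-pair version of $\Set_{c,1}$; applied with $k={n_R}<n_T$ (the indices $i=1,\dots,{n_R}$ in Def.~\ref{def: corrupt precoders S_c,2}) it yields $\doteq\gamma^{-{n_R}}$ for the single-pair version of $\Set_{c,2}$. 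A clean alternative, which I might prefer in the write-up, is to realize the uniform unit vector as $\mb{g}/\|\mb{g}\|$ for $\mb{g}\sim\Co\N(0,I_{n_T})$ and condition on $\|\mb{g}\|$; the constraints on the $k$ prescribed coordinates then become constraints on $k$ independent complex Gaussians, whose probability is immediately $\doteq\gamma^{-k}$.

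Finally I would finish with the union/one-pair sandwich: the upper bound follows by summing the single-pair estimate over all ordered pairs in $\Omega_{\mb{z}}^2$, whose number is $O(1)$ and therefore absorbed by $\doteq$; the matching lower bound is trivial because the event for any one pair is already contained in the corrupt set. This gives $\Prob(\Set_{c,1})\doteq\gamma^{-(n_T-1)}$ and $\Prob(\Set_{c,2})\doteq\gamma^{-{n_R}}$. The main obstacle is not any single step but the need to keep the density argument honest in all three cases $k<n_T-1$, $k=n_T-1$ (where the density is constant), and $n_T-{n_R}\geq 1$ for $\Set_{c,2}$; this is why I would favour the Gaussian representation, which handles all cases uniformly without case-splitting on exponents.
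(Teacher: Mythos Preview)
Your proposal is correct and follows essentially the same route as the paper: fix a pair $\mb{z}\neq\mb{z}^\prime$, use that $V_t(\mb{z}-\mb{z}^\prime)/\|\mb{z}-\mb{z}^\prime\|$ is Haar-uniform on the unit sphere via the Gaussian representation $\mb{g}/\|\mb{g}\|$, estimate the single-pair probability as $\doteq\gamma^{-k}$, and close with the one-pair/union-bound sandwich. The only cosmetic difference is that the paper sandwiches the event $\{|s_{t,i}|\le\gamma^{-1/2},\ i\in S\}$ between $\{\sum_{i\in S}|s_{t,i}|^2\le\gamma^{-1}\}$ and $\{\sum_{i\in S}|s_{t,i}|^2\le|S|\gamma^{-1}\}$ and then invokes that $\sum_{i\in S}|v_{t,i}|^2\sim\beta(|S|,n_T-|S|)$ to read off the exponent from a single Beta cdf, whereas you integrate the joint spherical density (or condition on $\|\mb{g}\|$) directly; both are equally valid and yield the same $\doteq$ asymptotics.
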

\begin{proof}
	See App. \ref{app: probability of corrupt precoders}.
\end{proof}
In the remainder of the paper, we use the notation $\Set_c$, which refers to $\Set_{c,1}$ or $\Set_{c,2}$, when ${n_R} \geq n_T$ and ${n_R} < n_T$, respectively. As we will see in Sec. \ref{sec: FSOC}, the probability $\Prob(\Set_c)$ is non-negligible and causes the loss of transmit diversity.

\section{Fixed space-only codes}
\label{sec: FSOC}

Consider a unitary but \textit{fixed} space-only precoder $P_t = P$, hence $V_t$ is random (uniformly distributed in the set of all unitary matrices $\mathcal{M}(n_T,n_T)$ (Lemma \ref{lemma: U properties})), but fixed once it is chosen. Because $V_t$ remains constant during the transmission of an outer codeword, we drop the index $t$ in the vectors $\mb{x}_t, \mb{y}_t$ and $\mb{w}_t$, and denote $V^\prime=V^\dagger P$. A new channel $H^\prime = U \Sigma V^\prime$ is formed, which has the same distribution as $H$ because $V^\prime$ has the same distribution as $V^\dagger$. As a consequence, a fixed space-only precoder does not achieve transmit diversity. 

In Sec. \ref{sec: MIMO vs. parallel channels}, the relation between MIMO and parallel channels was given. More specifically, a MIMO channel is equivalent to a parallel channel with random precoding. The probability of having a bad precoder (Def. \ref{def: bad precoders}) is zero, but corrupt precoders cause the diversity loss, which is formalized in the following lemma.

\begin{lemma}
	In a point-to-point flat fading $n_T \times {n_R}$ MIMO channel with a fixed $n_T \times n_T$ precoder $P_t = P$, there exists a coding rate $R_c < 1$ above which the receive diversity is achieved (i.e., the outage SNR exponent is $d_\out = {n_R}$) due to corrupt precoders $V^\prime \in \Set_c$.
	\label{lemma: no full diversity mimo}
\end{lemma}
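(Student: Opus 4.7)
The plan is to sandwich $P_\out$ between two quantities both $\doteq \gamma^{-n_R}$, using the corrupt-precoder framework for the lower bound and a standard receive-diversity argument for the upper bound. The starting observation is the Haar invariance of Lemma~\ref{lemma: U properties}: since $P$ is unitary and fixed, $V' = V^\dagger P$ is uniform on $\mathcal{M}(n_T,n_T)$ and independent of $\Sigma$, so $H' = U\Sigma V'$ has the same law as $H$ and the outage analysis reduces to that of an (uncoded) MIMO channel with input constellation $\Omega_{\mb{z}}$.

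For $P_\out \,\dot{\geq}\, \gamma^{-n_R}$, I would write $P_\out \geq \Prob(\text{outage}\cap V' \in \Set_c)$ and split into the two cases of Sec.~\ref{sec: Bad and Corrupt precoders for MIMO}. In the case $n_R \geq n_T$, Def.~\ref{def: corrupt precoders S_c,1} furnishes a pair $\mb{z} \neq \mb{z}'$ with $|s_i| \leq \gamma^{-1/2}$ for $i>1$. Substituted into (\ref{eq: mut info discrete alphabet MIMO}), the factors $f(\alpha_i,s_{t,i},w_{t,i})$ for $i>1$ obey $\gamma^{1-\alpha_i}|s_i|^2 \leq \gamma^{-\alpha_i}$ and stay $\Theta(1)$ whenever $\alpha_i \geq 0$, so asymptotically only the first sub-channel can separate this confusable pair. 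Adding the event $\alpha_1 \geq 1$ disables the first sub-channel too, and this single pair forces a $1$-bit contribution to the inner log-sum, yielding $I(\mb{x};\mb{y}|\Sigma,V') \leq m-1+o(1)$. Fixing once and for all any $R_c \in (1-1/m,\,1)$ makes this an outage event. Lemma~\ref{lemma: Prob Corrupt Precoder} gives $\Prob(V' \in \Set_{c,1}) \doteq \gamma^{-(n_T-1)}$, and a Laplace/Zheng--Tse-style evaluation of (\ref{eq: singular value distribution}), with minimizer $\alpha_1=1$, $\alpha_i=0$ for $i>1$, gives $\Prob(\alpha_1 \geq 1) \doteq \gamma^{-(n_R-n_T+1)}$; independence of $V'$ and $\Sigma$ combines them to $\gamma^{-n_R}$. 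For $n_R < n_T$, Def.~\ref{def: corrupt precoders S_c,2} makes $|s_i| \leq \gamma^{-1/2}$ on all $n_R$ active sub-channels, so no sub-channel separates the pair, $I \leq m-1+o(1)$ holds for typical $\Sigma$, and Lemma~\ref{lemma: Prob Corrupt Precoder} alone gives $P_\out\,\dot{\geq}\,\Prob(\Set_{c,2}) \doteq \gamma^{-n_R}$.

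For the matching bound $P_\out \,\dot{\leq}\, \gamma^{-n_R}$, since $H' \sim H$, I would invoke the classical receive-diversity lower bound for the equivalent uncoded MIMO system: even after conditioning on all but one component of $\mb{z}$ being known, the mutual information is at least that of the $1\times n_R$ single-transmit-antenna channel seen through one column $\mb{h}'_1$ of $H'$, whose outage probability decays as $\gamma^{-n_R}$ because $\|\mb{h}'_1\|^2$ is $\chi^2$ with $2n_R$ degrees of freedom. Combined with the previous lower bound, this yields $d_\out = n_R$.

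The step I expect to be the main obstacle is turning ``$V' \in \Set_c$ plus bad $\alpha_1$'' into the clean, SNR-uniform inequality $I \leq m-1+o(1)$ inside the log-sum-exp of (\ref{eq: mut info discrete alphabet MIMO}). Technically, one has to show that in this regime the unique confusable pair contributes exactly $1$ to the inner log-sum while all other pairs contribute vanishingly, and one must separately dispose of the small-probability region $\{\exists\, i:\,\alpha_i < 0\}$ in which the sub-channel SNRs do \emph{not} shrink. The thresholds $\gamma^{-1/2}$ in Defs.~\ref{def: corrupt precoders S_c,1}--\ref{def: corrupt precoders S_c,2} and $R_c > 1-1/m$ are chosen precisely to make this single-bit bookkeeping balance.
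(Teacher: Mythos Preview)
Your approach mirrors the paper's proof closely: achievability via the single-column $1\times n_R$ sub-channel of $H'=HP$ (what the paper phrases as ``using one transmit antenna and MRC''), and the converse via conditioning on $V'\in\Set_c$ together with $\alpha_1\geq 1$ (for $n_R\geq n_T$) or nothing further (for $n_R<n_T$), then combining Lemma~\ref{lemma: Prob Corrupt Precoder} with the Laplace-type evaluation of the singular-value density.

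One quantitative slip: the claim $I\leq m-1+o(1)$ overcounts. The confusable pair does make the \emph{inner} log-sum in (\ref{eq: mut info discrete alphabet MIMO}) at least $1$, but only for the two outer summands $\mb{x}\in\{\mb{x}_c,\mb{x}_c'\}$; after the $2^{-m}$ average this yields $I\leq m-2^{1-m}+o(1)$, not $m-1$. The paper accordingly only asserts $I=m-\Omega(1)$ on $\mathcal{A}_{\epsilon,1}$ and concludes ``there exists $R_c<1$'' without naming the threshold. Since the lemma only claims existence of such an $R_c$, your argument survives---just replace $R_c\in(1-1/m,\,1)$ by $R_c\in(1-2^{1-m}/m,\,1)$.
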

\begin{proof}
See App. \ref{app: proof lemma no full div mimo}.
\end{proof}
In accordance with the terminology of parallel channels with precoding, the loss of transmit diversity is caused because the random precoder falls too often in the set of corrupt precoders, or more precisely, because the probability to have a corrupt precoder does not converge fast enough to zero. Note that $\Set_{c,1}$ contains precoders with constraints on $n_T-1$ components of $\mb{s}_t$. A larger set, having constraints on less than $n_T-1$ components can be defined as well, but leads to a less tight lower bound on the outage probability when using the same proof techniques as in App. \ref{app: proof lemma no full div mimo}.

Adopting a time-varying precoder $P_t$, uniform in the set of unitary matrices, within the coherence time of the channel (thus, assuming that the MIMO channel remains constant) has no effect on the diversity order for an uncoded scenario with respect to a fixed space-only code $P$. The reason is that the new channel $H^\prime = U \Sigma V_t$ still has the same distribution as $H$, with $V_t = V^\dagger P_t$.

\section{Time-varying space-only codes, achieving full rate and full diversity}
\label{sec: TVSOC}

We propose to reduce the effect of corrupt precoders by averaging the mutual information over $N$ unitary precoders $V_t = V^\dagger P_t$ during the transmission of a codeword. We create $N$ realizations $V_t$ by changing $P_t$ after every $\frac{N_c}{N}$ channel uses. When $N$ is finite, the time-varying space-only code is denoted as \textit{EMI-$N$ code}. In this section, we assume that $N=N_c$ and $N_c \rightarrow \infty$; the corresponding time-varying space-only code is denoted as \textit{EMI code}. The convergence properties of the EMI-$N$ code are discussed in Sec. \ref{sec: Maximal coding rate for EMI-N code}. By letting $P_t$ being uniformly distributed in the set of all unitary matrices $\mathcal{M}(n_T,n_T)$, the matrix $V_t$ is also uniformly distributed in $\mathcal{M}(n_T,n_T)$ (Lemma \ref{lemma: U properties}). Furthermore, by drawing $P_t$ independently at each channel use, $V_t$ changes independently at each channel use. 

Note that the bottom $n_T-{n_R}$ elements of $\mb{z}$ are zero when ${n_R} < n_T$, see Sec. \ref{Full rate space-time and space-only coding}.

\begin{theorem}
	In a point-to-point flat fading $n_T \times {n_R}$ MIMO channel, using a coding rate $R_c < 1$ and using $n_T \times n_T$ precoders $P_t$, randomly generated for each channel use, being uniformly distributed in the set of all unitary matrices $\mathcal{M}(n_T,n_T)$, full diversity is achievable.
	\label{prop: full diversity mimo with distr rotation}
\end{theorem}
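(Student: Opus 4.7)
The plan is to bound $P_\out = \Prob(\E_t[I(\mb{x}_t;\mb{y}_t|\Sigma,V)]<R)$ and show its SNR-exponent equals $n_T n_R$. Since the precoders $P_t$ are drawn i.i.d.\ uniformly on $\mathcal{M}(n_T,n_T)$, Lemma \ref{lemma: U properties} makes $V_t = V^\dagger P_t$ i.i.d.\ Haar-distributed (conditionally on $V$); with $I(\cdot)$ uniformly bounded by $m$, the strong law of large numbers collapses the time average to the ensemble average, i.e.\ $\E_t[I(\mb{x}_t;\mb{y}_t|\Sigma,V_t)] = \bar{I}(\bs{\alpha}) := \E_{V'}[I(\mb{x};\mb{y}|\Sigma,V')]$ almost surely, which depends only on the normalized singular values. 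The outage event therefore reduces to the \emph{deterministic} fading set $\{\bar{I}(\bs{\alpha}) < R\}$.

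Next I would lower-bound $\bar{I}(\bs{\alpha})$ by splitting the Haar integral according to whether $V' \in \Set_c$. Using $I \geq 0$ and Lemma \ref{lemma: Prob Corrupt Precoder},
$$\bar{I}(\bs{\alpha}) \;\geq\; \bigl(1-\Prob(\Set_c)\bigr)\,\underline{I}_{\mathrm{nc}}(\bs{\alpha}), \qquad \underline{I}_{\mathrm{nc}}(\bs{\alpha}) := \inf_{V' \notin \Set_c} I(\mb{x};\mb{y}|\Sigma,V'),$$
with $\Prob(\Set_c) \doteq \gamma^{-(n_T-1)}$ for $n_R \geq n_T$ and $\gamma^{-n_R}$ for $n_R < n_T$, so the prefactor is $1 - o(1)$. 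The technical heart is then to prove that for every $V' \notin \Set_c$ and every $\bs{\alpha}$ with at least one coordinate $\alpha_i < 1 - \epsilon$, the pairwise-distance expansion~(\ref{eq: mut info discrete alphabet MIMO}) yields $\underline{I}_{\mathrm{nc}}(\bs{\alpha}) \geq m - O(\gamma^{-\epsilon}) > R$. Concretely, for each pair $(\mb{z},\mb{z}')$ the non-corrupt condition guarantees some coordinate $i$ where $|s_i| > \gamma^{-0.5}$; combined with a good singular value $\alpha_i < 1 - \epsilon$, the resulting per-pair effective SNR exceeds $\gamma^{\epsilon}$ and makes the exponential factor $f(\alpha_i, s_{t,i}, w_{t,i})$ Chernoff-negligible.

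With the outage region thus contained, up to first order in the exponent, in the deep-fade set $\{\alpha_i \geq 1,\forall i\}$, it remains to evaluate the tail integral against the joint density~(\ref{eq: singular value distribution}) by the Laplace method. The minimizing $\bs{\alpha}$ is $\alpha_i = 1$ for all $i$, which gives
$$d_\out \;=\; \inf_{\alpha_i \geq 1} \sum_{i=1}^{\min(n_T,n_R)} \bigl(|n_R - n_T| + 1 + 2(i-1)\bigr)\alpha_i \;=\; \min(n_T,n_R)\cdot\max(n_T,n_R) \;=\; n_T n_R.$$
A matching lower bound on $P_\out$, converting $\dot{\leq}$ to $\doteq$, comes from noting that in the deep-fade region even an arbitrary precoder gives MI at most $m + o(1)$ below the threshold, so a positive-measure slice of $\{\alpha_i \geq 1\}$ remains in outage.

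The hardest part I expect to be the pairwise-error bound on $\underline{I}_{\mathrm{nc}}(\bs{\alpha})$: non-corruption is a \emph{union-over-pairs} condition, so the ``safe'' coordinate depends on the pair $(\mb{z},\mb{z}')$, and that coordinate must be compatible with the ordering of the singular values of $\Sigma$ so that the effective SNR is actually large. Reconciling this with the $n_R < n_T$ case requires substituting $\Set_{c,2}$ for $\Set_{c,1}$ (per Definitions~\ref{def: corrupt precoders S_c,1}--\ref{def: corrupt precoders S_c,2}) and restricting to the top $n_R$ rows of $\Sigma V'$, together with the accompanying change in constraint count; but as shown in Step~4, both regimes produce the same exponent $n_T n_R$.
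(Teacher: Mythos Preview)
Your overall architecture---collapsing the time average to an ensemble average over Haar precoders, splitting the Haar integral over a ``bad'' precoder set, and then running the Zheng--Tse Laplace calculation on the resulting deep-fade region---is exactly the paper's. The gap is in the choice of bad set and threshold. You split over $\Set_c$ (Definitions~\ref{def: corrupt precoders S_c,1}--\ref{def: corrupt precoders S_c,2}) with threshold $\gamma^{-1/2}$, but this threshold is too aggressive for the lower bound you need on $\underline{I}_{\mathrm{nc}}$. Membership in $\bar{\Set}_{c,1}$ only says that for each pair $(\mb{z},\mb{z}')$ \emph{some} coordinate $i>1$ has $|s_i|>\gamma^{-1/2}$; a non-corrupt precoder may sit arbitrarily close to the boundary, say $|s_i|^2 = 2\gamma^{-1}$, and then the effective SNR $\gamma^{1-\alpha_i}|s_i|^2$ is only of order $\gamma^{-\alpha_i}$, which \emph{vanishes} for every $\alpha_i>0$ rather than exceeding $\gamma^{\epsilon}$ as you assert. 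Consequently $f(\alpha_i,s_i,w_i)\to 1$, the pairwise term does not decay, and the infimum $\underline{I}_{\mathrm{nc}}(\bs{\alpha})$ is \emph{not} close to $m$ even when some $\alpha_j<1-\epsilon$. The coordinate-matching difficulty you flag at the end (the ``safe'' index need not coincide with the good singular value, and $|s_1|$ is uncontrolled) is a second manifestation of the same problem.

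The paper fixes both issues simultaneously by introducing a \emph{different} set $\Set_{c,3}$: precoders for which $\exists\,i,\mb{z},\mb{z}'$ with $|s_{t,i}|^2\le(\log\gamma)^{-p}$. The complement $\bar{\Set}_{c,3}$ then forces $|s_{t,i}|^2>(\log\gamma)^{-p}$ for \emph{every} $i\le\min(n_T,n_R)$ and \emph{every} pair, so whichever index $j$ has $\alpha_j<1-\epsilon$, the corresponding factor satisfies $\gamma^{1-\alpha_j}|s_j|^2>\gamma^{\epsilon}(\log\gamma)^{-p}\to\infty$ and the product in (\ref{eq: mut info discrete alphabet MIMO}) vanishes uniformly over $\bar{\Set}_{c,3}$. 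The price is that $\Prob(\Set_{c,3})$ decays only like $(\log\gamma)^{-p}$ rather than polynomially (Lemma~\ref{lemma: Prob Corrupt Precoder} no longer applies), but this is harmless: you only need the prefactor $1-\Prob(\Set_{c,3})\to 1$, not a polynomial rate. With this one change the remainder of your outline---outage region contained in $\{\alpha_i\ge 1-\epsilon\ \forall i\}$ and exponent $\sum_i(2i-1+|n_R-n_T|)=n_Tn_R$---goes through exactly as you sketch.
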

\begin{proof}
See App. \ref{app: full diversity mimo with distr rotation}.
\end{proof}
Note that in (\ref{sample mean in app}) in App. \ref{app: full diversity mimo with distr rotation}, we used the fact that the temporal mean equals the sample mean, i.e.,
\begin{equation}
	\E_t \left[  I(\mb{x}_t; \mb{y}_t | \Sigma, V_t) \right] = \E \left[  I(\mb{x}_t; \mb{y}_t | \Sigma, V_t) \right], \label{sample mean to temporal mean}
\end{equation}
where the expectation at the right side in (\ref{sample mean to temporal mean}) is over $V_t$, uniformly distributed in the set of unitary matrices $\mathcal{M}(n_T, n_T)$, for  a fixed $t$. This is valid for the EMI code. 

Theorem \ref{prop: full diversity mimo with distr rotation} corroborates experimental results in the literature. For example, in \cite{hiroike1992ceo, ma2005stm}, the precoder $P_t$ was made variable by multiplying an initial precoder $P_0$ by a diagonal matrix $A_t$ including $e^{j 2 \pi \theta_i(t)}$ on the $i$-th diagonal element, where $\theta_i(t)$ varies each channel use. The overall precoder $P_0 A_t$ is unitary and time-varying\footnote{Note however that $P_0 A_t$ is not uniform in the set of unitary matrices.}. Because the multiplication with $A_t$ corresponds to adding a time-varying phase to the complex baseband signal at each transmit antenna, the scheme was referred to as phase sweeping. Simulation results in \cite{hiroike1992ceo, ma2005stm} suggest that full diversity is achieved in the presence of an error-correcting code with a particular coding rate. Theorem \ref{prop: full diversity mimo with distr rotation} now proves that full diversity is actually achievable (through the outage probability) for codes with any coding rate smaller than one, as long as $P_t$ is uniform in the set of unitary matrices.

\section{Numerical results}
\label{Numerical results}

In this section, we corroborate Theorem \ref{prop: full diversity mimo with distr rotation} by numerically determining the outage probability for the EMI-$N$ code, which are also compared with the outage probabilities of approximately universal space-time codes. Despite being an unfair comparison taking into account the detection complexity, it allows to assess the loss of coding gain as a price for the reduced detection complexity associated with the EMI code. The outage probabilities are achievable lower bounds of the word error rate (WER) of practical coded systems. Therefore, it is useful to compare the outage probabilities with the WER of a coding scheme having the EMI code as inner code and an error-correcting code as outer code. 

In Sec. \ref{sec: LDPC code optimization for the EMI code}, we discuss the optimization of LDPC codes as outer code in such a coding scheme. Next, the convergence properties of the EMI-$N$ code to the EMI code are discussed in Sec. \ref{sec: Maximal coding rate for EMI-N code}. In the last two subsections, we present the numerical results for the $n_T \times {n_R}$ MIMO channel with ${n_R} \geq n_T$ and ${n_R} < n_T$, respectively. 

The outage probabilities and WERs of the LDPC coded modulations are determined by means of Monte Carlo simulations. If the decoding complexity of exhaustive ML-detection of the STCs was too high, then sphere decoding is performed \cite{viterbo1999aul, Agrell2002cps, Boutros2003sis}. More specifically, soft output sphere decoding is performed \cite{Boutros2003sis}, so that the Tanner graph of the LDPC code gets a soft input from the detectors of the STC. Iterative decoding and detection is performed, where after every 10 LDPC decoding iterations, a new detection of the STC is performed. The LDPC code is decoded by means of the sum-product algorithm on the Tanner graph. The total number of decoding iterations is limited by $100$.  

The computation of (\ref{eq: mut info discrete alphabet MIMO form 1}) in the numerical calculation of the outage probability is in some cases (e.g. for approximately universal STCs, large constellation size or too many transmit antennas) time consuming because the constellation $\Omega_{\mb{x}}$ might be very large. Therefore, the inner sum over $\mb{x}_t^\prime \in \Omega_{\mb{x}}$ in (\ref{eq: mut info discrete alphabet MIMO form 1}) is simplified by using a sphere decoder. The sphere decoder outputs a list of maximum likelihood metrics $d^2(\mb{y}_t, \sqrt{\gamma} \Sigma \mb{x}_t^\prime)$ for all closest constellation points to the constellation point with the highest likelihood \cite{Agrell2002cps}. The list size is limited to $1000$. 

\subsection{LDPC code optimization for the EMI code}
\label{sec: LDPC code optimization for the EMI code}

We focus on binary LDPC codes $\mathcal{C}[N_b,K_b]_2$ with dimension $K_b$, so that the coding rate $R_c = \frac{K_b}{N_b}$. Irregularity is introduced through the standard bit and check node degree distributions, characterized by the polynomials $\lambda(x)$ and $\rho(x)$, here from an edge perspective \cite{richardson2001dca}. 

It is worth citing \cite{ten2004dol, yue2005ooi, zheng2006lcm} and references therein, where the degree distributions of LDPC codes have been optimized for Gaussian and ergodic MIMO channels. However, optimizing these degree distributions for a non-ergodic MIMO channel (hence, where the random channel remains constant during the whole codeword) has not yet been performed to the best knowledge of the authors. In Sec. \ref{sec: MIMO vs. parallel channels}, we have explained that the non-ergodic MIMO channel is equivalent to a parallel channel with a random precoder. To the authors' best knowledge, only \cite{duy2011pfw} presented a tool to optimize the degree distributions of an LDPC code for parallel Rayleigh faded channels with precoding. We therefore utilized the techniques from \cite{duy2011pfw} to generate the degree distributions of the LDPC code. 

We briefly describe the techniques developed in \cite{duy2011pfw}, where coded modulations (including the precoding matrix, the mapping function and the error-correcting code) were optimized to yield a WER closely approaching the outage probability. The off-line optimization, done using a geometric approach, was limited to at most $B+1$ times the effort for Gaussian channels, where $B$ is the number of parallel channels, here $\min(n_T,{n_R})$. The geometric approach involves optimizing the coded modulation, e.g. via EXIT charts, in $B+1$ well chosen points in the $B$-dimensional fading plane. More specifically, in $B$ fading points on the $B$ axes of the fading plane close to the outage boundary, plus one fading point on the ergodic line ($\alpha_1 = \ldots = \alpha_B$). The outage boundary limits the set of fading gain vectors for which the instantaneous mutual information is smaller than the spectral efficiency. More details on the outage boundary and the optimization of the outage probability through a proper selection of the precoding matrix and the constellation $\Omega_{\mb{z}}$ can be found in \cite{duy2011pfo}. 

The main difference between the current channel equation (see Eq. \ref{eq: new channel eq 2}) and the channel model in \cite{duy2011pfw} is that the precoder $V_t$ is random. Hence, the instantaneous mutual information $I(\mb{x}_t;\mb{y}_t|\Sigma,V_t)$ does not only depend on the instantaneous realization of $\Sigma$, but also on the precoder realization $V_t$ (Fig. \ref{fig: capa variation on outage boundary} shows the variation of the instantaneous mutual information with $V_t$). Because of the dependence on $V_t$, an outage boundary for the singular values $\{\sigma_i, i = 1, \ldots, \min({n_R},n_T)\}$ cannot be defined based on $I(\mb{x}_t;\mb{y}_t|\Sigma,V_t)$. However, when the EMI-code is used, the mutual information is averaged over all unitary precoders in $\mathcal{M}(n_T,n_T)$, so that the mutual information $\E_t[I(\mb{x}_t; \mb{y}_t | \Sigma, V_t)]$ only depends $\Sigma$. We define the corresponding outage boundary as the set of singular values $\{\sigma_i, i = 1, \ldots, \min({n_R},n_T)\}$ which yield $\E_t[I(\mb{x}_t; \mb{y}_t | \Sigma, V_t)] = R$. In practice, we approximate $\E_t[I(\mb{x}_t; \mb{y}_t | \Sigma, V_t)]$ by averaging $I(\mb{x}_t; \mb{y}_t | \Sigma, V_t)$ over a large number (e.g. $100$) of randomly generated matrices $V_t$. Fig. \ref{fig: outage boundary 2x2 MIMO QPSK R=3.6} illustrates such an outage boundary. 
\begin{figure*}
\centering
\begin{subfigure}[b]{.35\linewidth}
   \centering \includegraphics[width=0.9 \textwidth, angle=-90]{./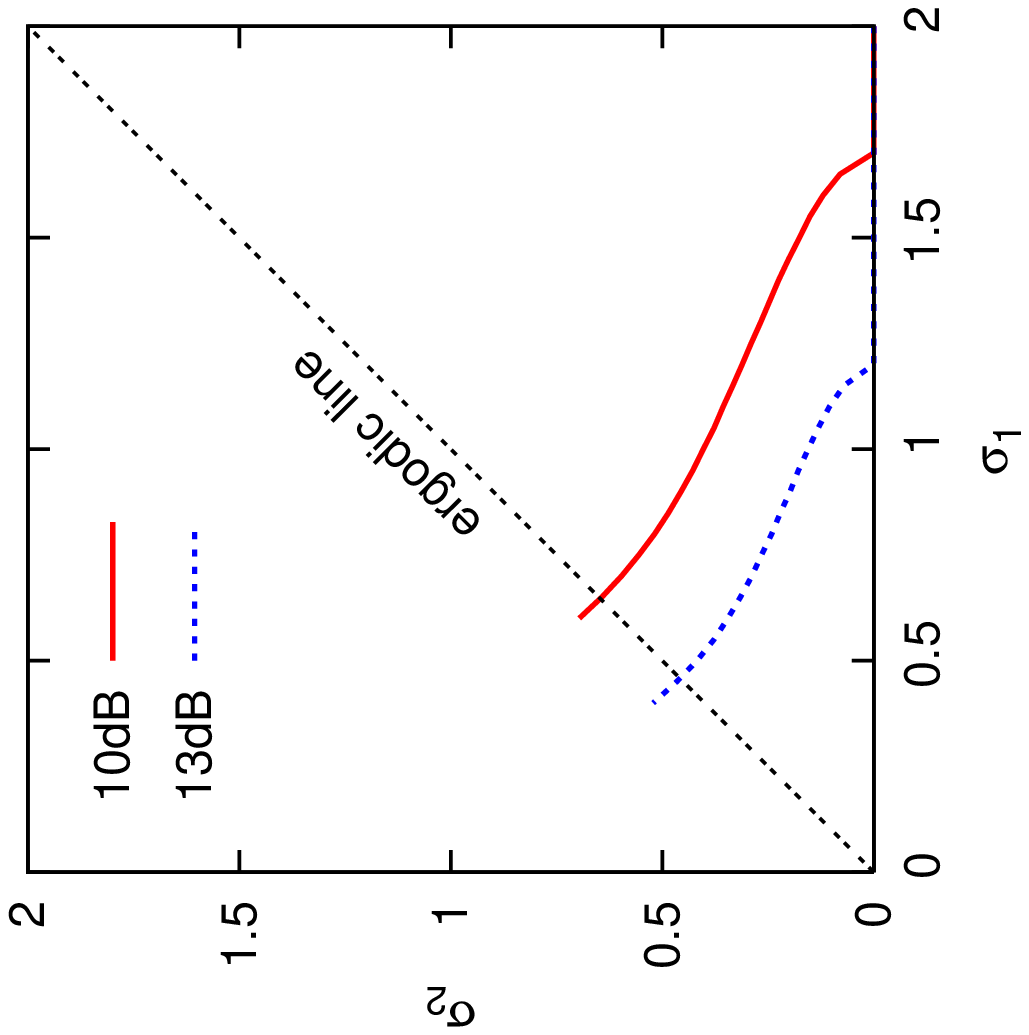}
	\caption{The outage boundary is shown at two SNR values: $E_b/N_0 = 10$ dB and $E_b/N_0 = 13$ dB. Only half of the plane is used, as we consider the \textit{ordered} singular values, $\sigma_1 \geq \sigma_2$.}
	\label{fig: outage boundary 2x2 MIMO QPSK R=3.6}
\end{subfigure}
\quad
\begin{subfigure}[b]{.55\linewidth}
   \centering \includegraphics[width=1.0 \textwidth]{./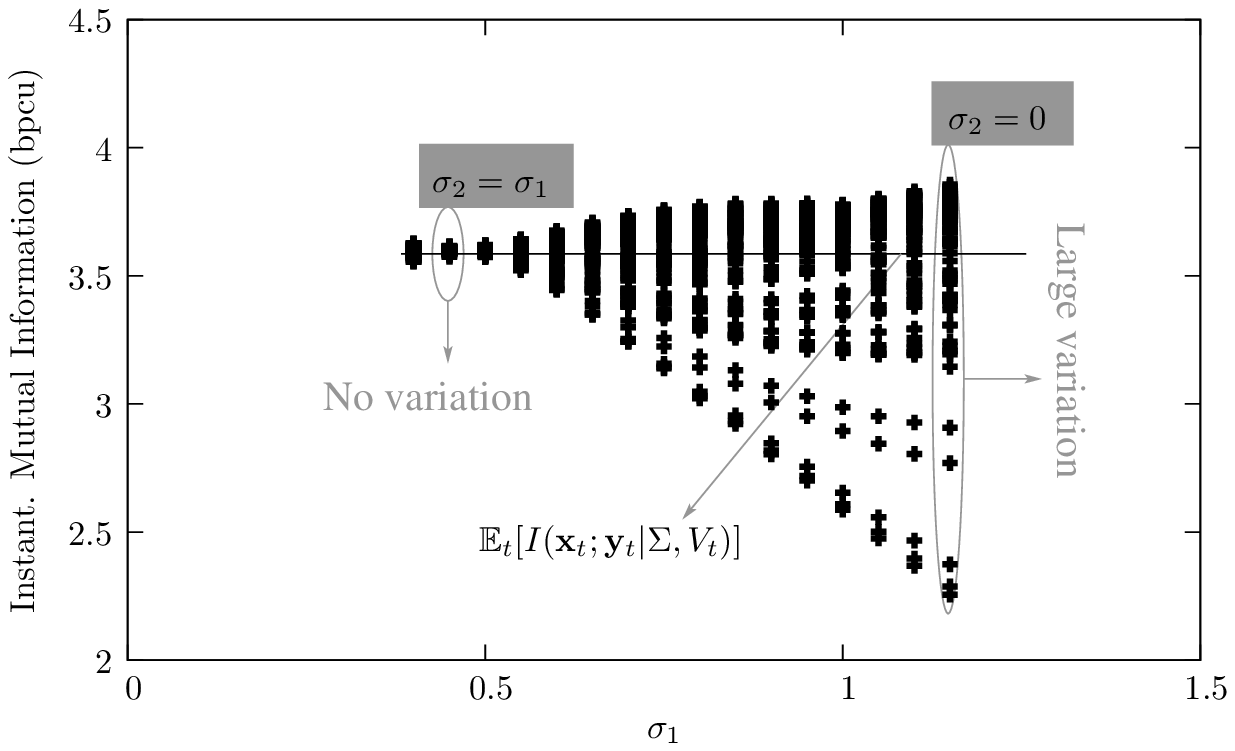}
   \caption{The mutual information $I(\mb{x}_t; \mb{y}_t | \Sigma, V_t)$ largely varies with $V_t$ while $(\sigma_1, \sigma_2)$ is on the outage boundary, especially when $\sigma_1 \gg \sigma_2$. For a given $\Sigma$, the values of $100$ mutual informations $I(\mb{x}_t; \mb{y}_t | \Sigma, V_t)$ are displayed, each corresponding to a realization of $V_t$. The adopted SNR is $E_b/N_0 = 13$ dB.}
   \label{fig: capa variation on outage boundary}
\end{subfigure}
	\caption{At the left side, the outage boundary is shown for $2\times 2$ MIMO with $\Omega_z=4$-QAM and $R=3.6$ bpcu, which is the set of singular values $(\sigma_1, \sigma_2)$ where $\E_t[I(\mb{x}_t; \mb{y}_t | \Sigma, V_t)]$ is equal to $R$. We approximated $\E_t[I(\mb{x}_t; \mb{y}_t | \Sigma, V_t)]$ by averaging $I(\mb{x}_t; \mb{y}_t | \Sigma, V_t)$ over $100$ randomly generated matrices $V_t$. The variation of $I(\mb{x}_t; \mb{y}_t | \Sigma, V_t)$ with $V_t$ is illustrated at the right side, where $(\sigma_1, \sigma_2)$ are located on the outage boundary (the outage boundary is a one-dimensional line, so that only one parameter, e.g. $\sigma_1$, is sufficient to characterize the location on this line).}
	\label{fig: outage bound and variation of mut info with V_t}
\end{figure*}
As a consequence, the same method as in \cite{duy2011pfw} can be applied to generate LDPC code degree distributions when using the EMI inner code. 

\subsection{Maximal coding rate for the EMI-$N$ code}
\label{sec: Maximal coding rate for EMI-N code}

Theorem \ref{prop: full diversity mimo with distr rotation} proved that full diversity is achieved by the EMI code for any coding rate $R_c < 1$. When $P_t$ changes a finite number of times $N$ during a codeword transmission, (\ref{sample mean to temporal mean}) and thus the proof in App. \ref{app: full diversity mimo with distr rotation}, are not valid. The maximum coding rate yielding full diversity may be smaller than for the EMI code. In this section, we provide an upper bound on the maximal coding rate yielding full diversity for the EMI-$N$ code, together with supporting numerical results. 

As the probability of occurrence of corrupt precoders is strictly larger then $\gamma^{-nr}$, the occurrence of corrupt precoders amongst the $N$ precoders observed during the transmission of a codeword has to be considered. To derive the maximal coding rate yielding full-diversity, the maximal mutual information that can be achieved between input and output, when the input is transformed by a corrupt precoder, needs to be determined. Then, through a similar formula as Eq. (\ref{eq: maximal mutual info full div emi}), the maximal coding rate could be determined. 

For example, consider the different classes of bad precoders for the $2 \times 2$ MIMO channel and $\Omega_z=4$-QAM, which are given in Eq. (\ref{eq: bad precoders 4-QAM}). Each type of bad precoder corresponds to a maximum achievable mutual information $I(\mb{x}_t; \mb{y}_t | \Sigma, V_t)$, which for the precoders from (\ref{eq: bad precoders 4-QAM}) are given by $\{2, 2, \log_2 9, \log_2 12, \log_2 12\}$. This can be verified analytically or by means of simulation. Next, different subsets of the set of corrupt precoders will have a maximal mutual information that converges to the maximal mutual information of one of the bad precoders for increasing SNR. Each subset will have to be defined more precisely than in Defs. \ref{def: corrupt precoders S_c,1} and \ref{def: corrupt precoders S_c,2}, i.e., some subsets will have more than one pair of $\mb{z}, \mb{z}^\prime$ yielding $|s_{t,i}| \leq \gamma^{-0.5}$ for some $i$. Therefore, some subsets might have a smaller probability of occurrence than what is given in Lemma \ref{lemma: Prob Corrupt Precoder}. 

A detailed study of this behaviour is outside the scope of this work. However, we can provide an upper bound on the maximal coding rate yielding full diversity, by assuming an optimistic case. Instead of averaging $I(\mb{x}_t; \mb{y}_t | \Sigma, V_t)$ over $V_t = V^\dagger P_t$ by changing $P_t$, we can consider the average of $I(\mb{x}_t; \mb{y}_t | H)$ over $N$ realizations of $H$, corresponding to a block fading MIMO channel with $N$ states. The achievable diversity order for a block fading MIMO channel with $N$ blocks is upper bounded by \cite{gresset2008stc}
\begin{equation}
	d_{\textrm{up}}= {n_R} (\lfloor N n_T (1-R_c) \rfloor+1). 
\end{equation}
Hence, by putting $d_{\textrm{up}} \geq nr$, we get after some calculus that
\begin{equation}
	R_{c,\textrm{max}} \leq 1 - \frac{1}{N} + \frac{1}{nN}. 
\label{eq: Singleton upper bound EMI-N}
\end{equation}
Further work must verify whether this bound is tight for the EMI-$N$ code. 


%

The theoretical analysis is laborious, but conjectures can be made from the numerical simulations by comparing the diversity order of the EMI-$N$ code with that of the EMI code, which achieves full diversity (Fig. \ref{fig: Outage 2Rot and 10Rot}). When using $N=2$, it may be conjectured that the bound in Eq. (\ref{eq: Singleton upper bound EMI-N}) is achieved as full diversity is still achieved for a coding rate $R_c=0.75$. When using $N=10$, full diversity is still achieved for a coding rate $R_c=0.9$. However, at a coding rate $R_c=0.95$, which corresponds to the bound in Eq. (\ref{eq: Singleton upper bound EMI-N}), full diversity is not achieved.

\begin{figure*}
\begin{subfigure}[b]{.47\linewidth}
   \centering \includegraphics[width=1.0 \textwidth]{./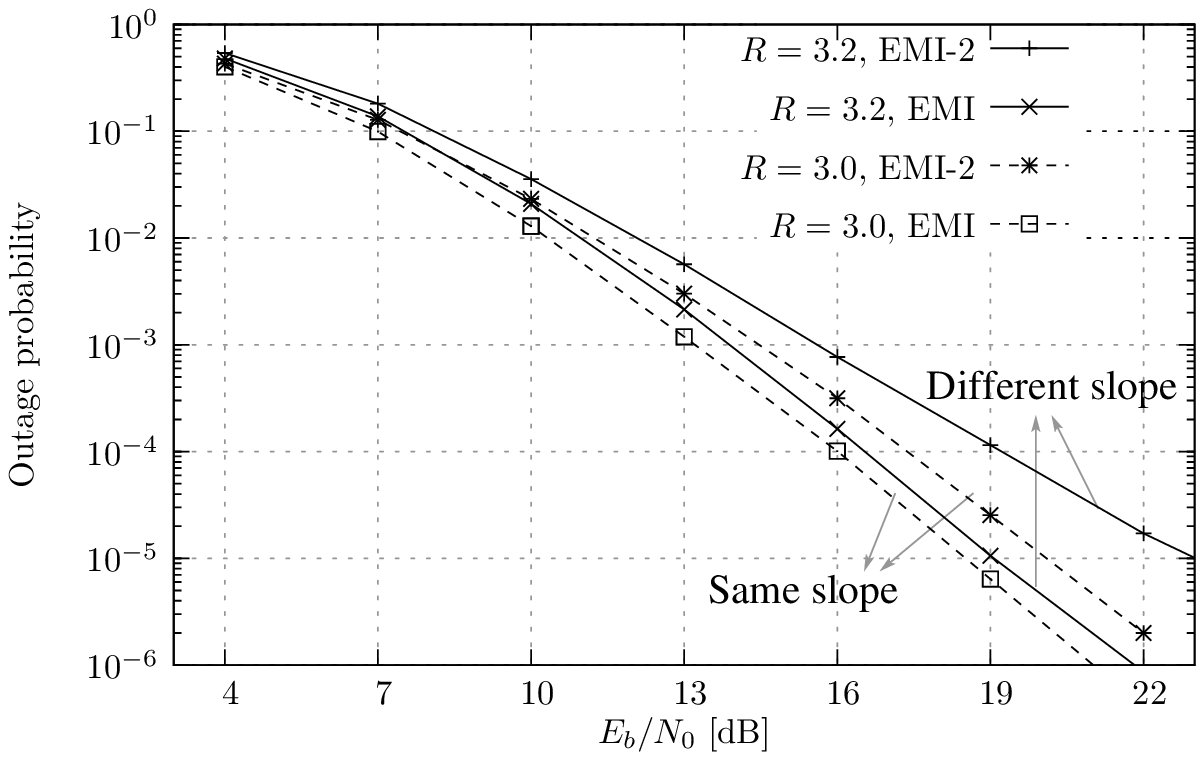}
	\caption{The outage probabilities of the EMI-$2$ for $R=\{3,~3.2\}$ bpcu are shown. It may be conjectured that full diversity is achieved for rates not larger than $R=3.0$ bpcu, which corresponds to $R_c=0.75$.}
	\label{fig: Outage 2Rot nt=nr=2 QPSK}
\end{subfigure}
\quad
\begin{subfigure}[b]{.47\linewidth}
	\centering \includegraphics[width=1.0 \textwidth]{./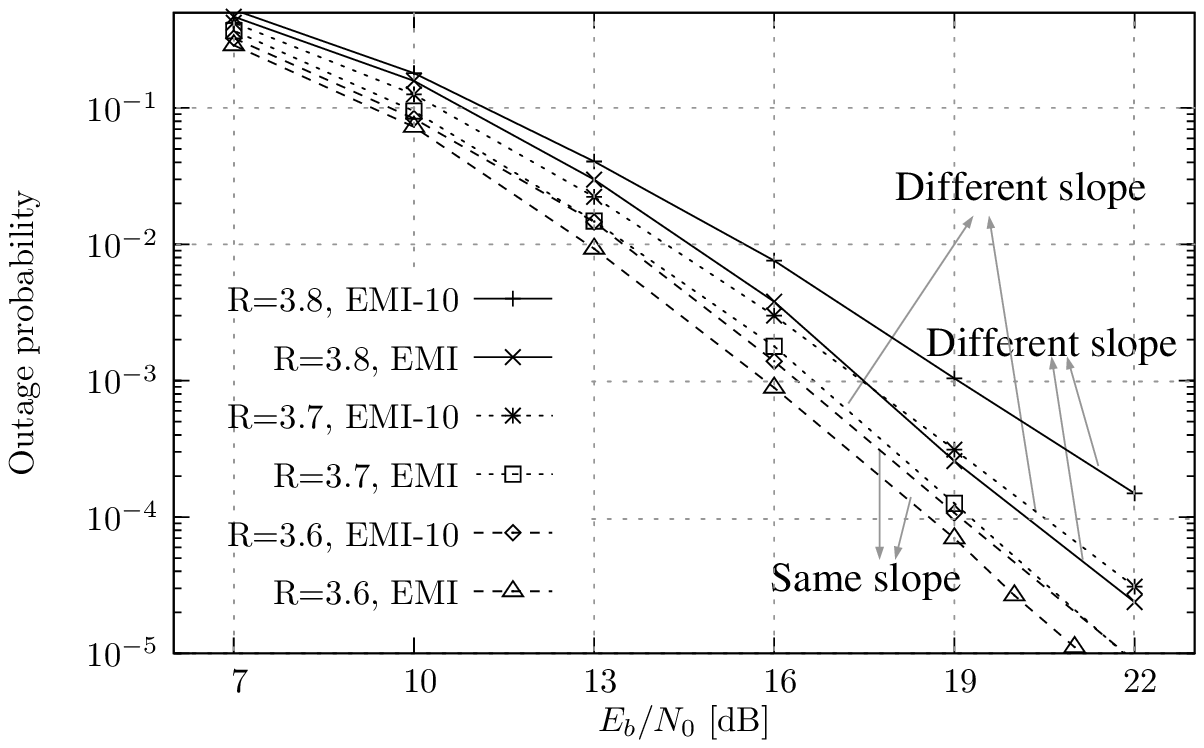}
	\caption{The outage probabilities of the EMI-$10$ for $R=\{3.6,~3.7,~ 3.8\}$ bpcu are shown. It may be conjectured that full diversity is achieved for rates $R \leq 3.6$ bpcu, which corresponds to $R_c=0.9$.}
	\label{fig: Outage 10Rot nt=nr=2 QPSK}
\end{subfigure}
	\caption{It can be numerically verified whether the EMI-$N$ code (EMI-$2$ at the left side and EMI-$10$ at the right side) achieves full diversity by comparing its outage probability to that of the EMI-code, which always achieves full diversity. Simulation parameters: $2\times 2$ MIMO with $\Omega_z=4$-QAM and different rates.}
	\label{fig: Outage 2Rot and 10Rot}
\end{figure*}

Because it is difficult to design very high coding rate error-correcting codes (see \cite{yang2004doe} and references therein), coding rates larger than $R_c=0.9$ almost never occur in practice. Therefore, we only consider the EMI-$10$ code, achieving full diversity at coding rates smaller than or equal to $R_c=0.9$, in the remainder of the paper, unless mentioned otherwise.

\subsection{Numerical results for the $n_T \times {n_R}$ MIMO, ${n_R} \geq n_T$}

When ${n_R} \geq n_T$, a transformation of $n_T$ independent symbols are transmitted at each channel use. We compare the outage and error rate performance of the EMI-$10$ code with STCs from the literature that are optimal in terms of uncoded error rate (e.g., \cite{belfiore2005tgc, elia2007pstc} and \cite{boutros2009tap}) or optimal in terms of error rate assuming a genie condition (e.g., \cite{boutros2009tap} and \cite{gresset2008stc}). We also include the error rate performance without precoding (which is equivalent to a fixed space-only code) for reference. 

First, let us consider the $2 \times 2$ MIMO channel and $\Omega_z=4$-QAM. Fig. \ref{fig: Code1_N=5760_Detmult=10_SpDec_ldpc_MIMO_nt=nr=2_R=3.6_QPSK.eps} compares the performances of the Golden code \cite{belfiore2005tgc} with the EMI-$10$ code and the case without precoding. The performance of other STCs, such as \cite{boutros2009tap, gresset2008stc} were also tested, but their error rates are slightly larger than that of the Golden code, so that they are not displayed for clarity. The LDPC code of $R_c=0.9$ is a regular $(3,30)$ LDPC code, which showed the best performance among other tested LDPC codes of the same coding rate. The code length of the LDPC code is $N_b = 5760$.
\begin{figure}[!ht]
	\centering
	\includegraphics[width=0.49 \textwidth]{./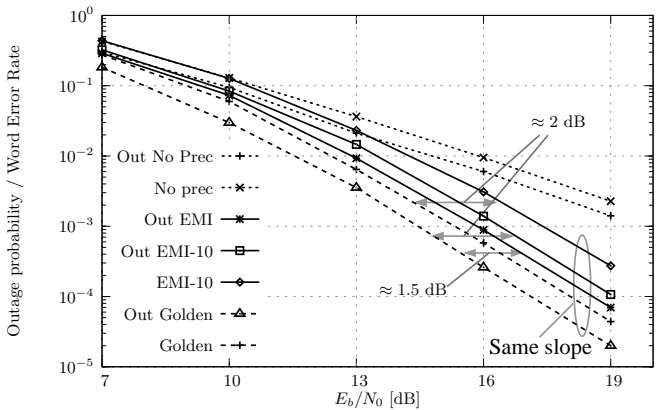}
	\caption{For a coding rate $R_c=0.9$ (i.e., $R=3.6$ bpcu) and a $2 \times 2$ MIMO channel, the outage probabilities of the EMI and EMI-$10$ code achieve full diversity and perform at $1.5$ and $2$ dB from the outage probability of the Golden code. ``Out'' stands for outage probability, the other curves represent the WER of a practical system with an error-correcting code.}
	\label{fig: Code1_N=5760_Detmult=10_SpDec_ldpc_MIMO_nt=nr=2_R=3.6_QPSK.eps}
\end{figure}
As theoretically proved in Theorem \ref{prop: full diversity mimo with distr rotation}, full diversity is achieved by the EMI code. The focus of this paper was on full rate, full diversity, and low decoding complexity. Of course, the significant reduction in complexity comes at the expense of a loss in coding gain (the EMI and EMI-$10$ code respectively require $1.5$ and $2$ dB more energy to have the same error rate performance than the Golden code). However, as shown in Figs. \ref{fig: Outages_LDPC_Detmult=10_MIMO_nt=nr=2_R=3.0_qpsk.eps} and \ref{fig: Outage_LDPC_Detmult=10_MIMO_nt=nr=2_R=2.2_qpsk.eps}, this loss in coding gain vanishes when the coding rate decreases. More spefically, the loss in coding gain is around $0.5$ dB when $R_c=0.75$ and $0$ dB when $R_c=0.55$. In Figs. \ref{fig: Outages_LDPC_Detmult=10_MIMO_nt=nr=2_R=3.0_qpsk.eps} and \ref{fig: Outage_LDPC_Detmult=10_MIMO_nt=nr=2_R=2.2_qpsk.eps}, the outage probability of the EMI and EMI-$10$ code coincide, so that the EMI code is not shown for clarity. Similarly, the error rates of the Golden code and the STCs from \cite{boutros2009tap, gresset2008stc} practically coincide. The degree distributions of the LDPC code were obtained as explained in Sec. \ref{sec: LDPC code optimization for the EMI code} and are given by
\[ \lambda(x) = 0.231 x + 0.543x^2 + 0.226 x^{19}; ~~~ \rho(x) = x^{12}\]
for $R_c=0.75$ and by
\[ \lambda(x) = 0.215 x + 0.465x^2 + 0.040x^{19} +0.280 x^{20}; ~~~ \rho(x) = x^{7}\]
for $R_c=0.55$. The code length is $N_b=5760$.

\begin{figure}[!ht]
	\centering
	\includegraphics[width=0.49 \textwidth]{./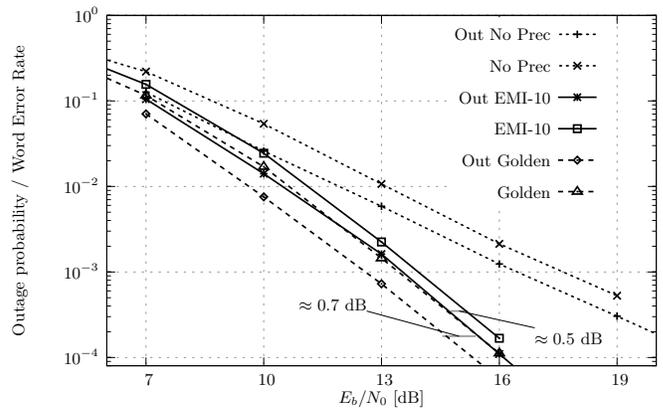}
	\caption{For a coding rate $R_c=0.75$ (i.e., $R=3.0$ bpcu), the horizontal SNR-gap between the outage probabilities of the EMI-$10$ code and the Golden code reduced from $2$ dB (for $R_c=0.9$) to $0.5$ dB. Other simulation parameters: $2 \times 2$ MIMO, ``Out'' stands for outage probability, the other curves represent the LDPC code WER.}
	\label{fig: Outages_LDPC_Detmult=10_MIMO_nt=nr=2_R=3.0_qpsk.eps}
\end{figure}

\begin{figure}[!ht]
	\centering
	\includegraphics[width=0.49 \textwidth]{./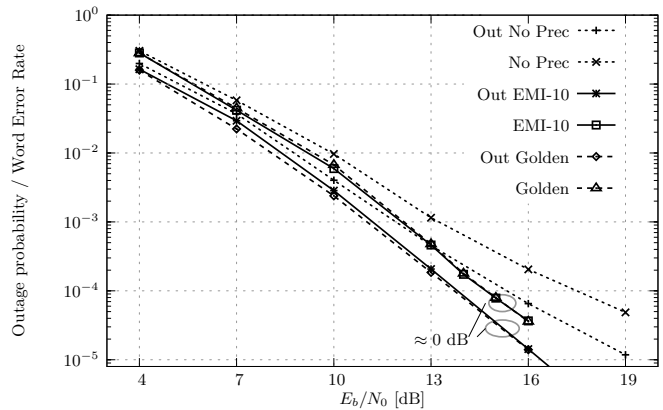}
	\caption{For a coding rate $R_c=0.55$ (i.e., $R=2.2$ bpcu), the horizontal SNR-gap between the outage probabilities of the EMI-$10$ code and the Golden code has vanished. Other simulation parameters: $2 \times 2$ MIMO, ``Out'' stands for outage probability, the other curves represent the LDPC code WER.}
	\label{fig: Outage_LDPC_Detmult=10_MIMO_nt=nr=2_R=2.2_qpsk.eps}
\end{figure}

Next, we consider the $2 \times 3$ and $3 \times 3$ MIMO channels using $\Omega_z=4$-QAM. Figs. \ref{fig: Code1_N=5760_Detmult=10_SpDec_ldpc_MIMO_nt=2_nr=3_R=3.6_QPSK.eps} and \ref{fig: Code1_N=5760_Detmult=10_SpDec_ldpc_MIMO_nt=nr=3_R=5.4_QPSK.eps} compare the performances of approximately universal STCs (the Golden code \cite{belfiore2005tgc} for $n_T=2$ and the Perfect STC \cite{elia2007pstc} for $n_T=3$) with the EMI-$10$ code and the case without precoding. Other STCs, such as \cite{boutros2009tap, gresset2008stc} for $n_T=2$, have a similar or slightly degraded error rate performance, so that they are not displayed for clarity. The LDPC code of $R_c=0.9$ is again a regular $(3,30)$ LDPC code and the code length of the LDPC code is $N_b = 5760$.
\begin{figure}[!ht]
	\centering
	\includegraphics[width=0.49 \textwidth]{./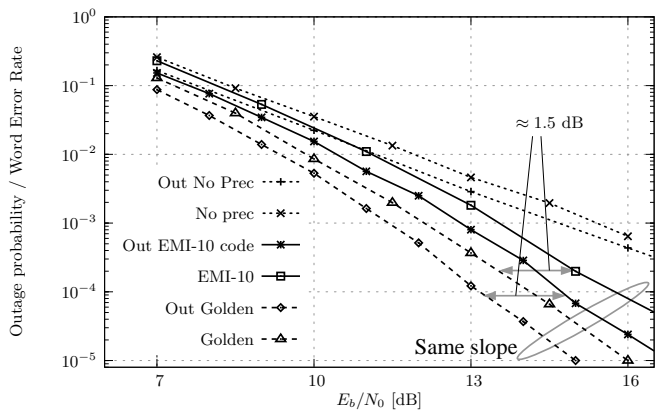}
	\caption{The EMI-$10$ code also achieves full diversity for $R_c=0.9$ (i.e., $R=3.6$ bpcu) and the $2 \times 3$ MIMO channel, performing at approximately $1.5$ dB from the Golden code. ``Out'' stands for outage probability, the other curves represent the LDPC code WER.}
	\label{fig: Code1_N=5760_Detmult=10_SpDec_ldpc_MIMO_nt=2_nr=3_R=3.6_QPSK.eps}
\end{figure}
\begin{figure}[!ht]
	\centering
	\includegraphics[width=0.49 \textwidth]{./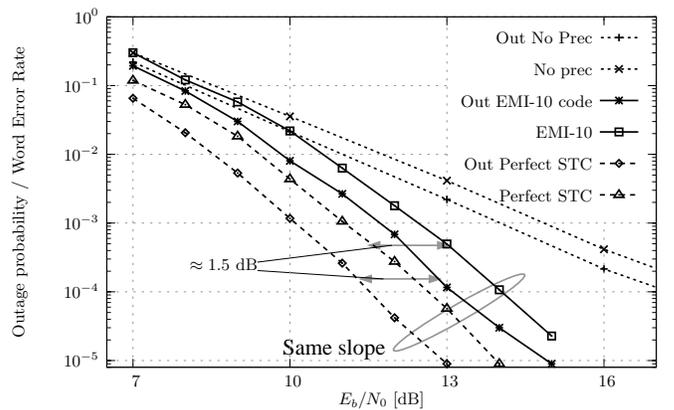}
	\caption{Also for the $3 \times 3$ MIMO channel and $R_c=0.9$ (i.e., $R=5.4$ bpcu), full-diversity is achieved by the EMI-$10$ code, again performing at approximately $1.5$ dB from the approximately universal code, denoted as perfect STC. ``Out'' stands for outage probability, the other curves represent the LDPC code WER.}
	\label{fig: Code1_N=5760_Detmult=10_SpDec_ldpc_MIMO_nt=nr=3_R=5.4_QPSK.eps}
\end{figure}
Full diversity is achieved by the EMI-$10$ code and as for the $2 \times 2$ MIMO channel, the significant reduction in complexity comes at the expense of a loss in coding gain when the coding rate is close to one. When $R_c=0.9$, this loss is approximately $1.5$ dB. 

Finally, we consider a larger constellation size for the $2 \times 2$ MIMO channel, using $\Omega_z=16$-QAM (Fig. \ref{fig: Code1_N=5760_Detmult=10_SpDec_ldpc_MIMO_nt=nr=2_R=7.2_16qam.eps}). The other parameters, such as the degree distributions, coding rate and the code length are the same. The performance of the EMI-$10$ code is compared with those of the Golden code, the cyclotomic code \cite{gresset2008stc} and Aladdin-Pythagoras code \cite{boutros2009tap}. The loss in coding gain of the EMI-code is not larger than $1.5$ dB.
\begin{figure}[!ht]
	\centering
	\includegraphics[width=0.49 \textwidth]{./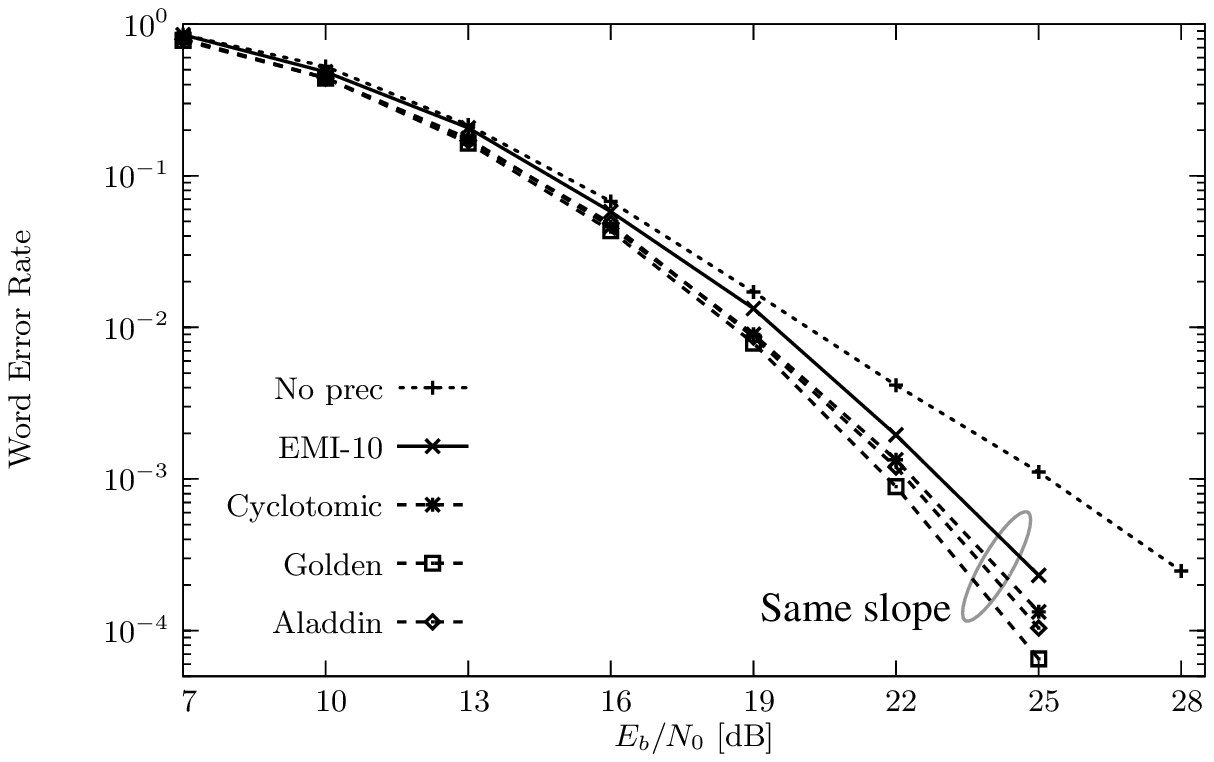}
	\caption{Also for higher constellation sizes (here $\Omega_z=16$-QAM), the EMI-$10$ code achieves full diversity in terms of WER. Other simulation parameters: $2 \times 2$ MIMO, $R_c=0.9$ ($R=7.2$ bpcu).}
	\label{fig: Code1_N=5760_Detmult=10_SpDec_ldpc_MIMO_nt=nr=2_R=7.2_16qam.eps}
\end{figure}

\subsection{Numerical results for the $n_T \times {n_R}$ MIMO, ${n_R} < n_T$}

When ${n_R} < n_T$, a transformation of, on average, ${n_R}$ independent symbols are transmitted at each channel use. For the $2 \times 1$ MIMO channel, the Alamouti code \cite{Alamouti1998ast} is well known to be optimal in many ways, e.g. in terms of uncoded error rate. The loss of coding gain of the EMI-$N$ code is quite large (e.g., more than $3$ dB for $R_c=0.9$) when the coding rate is close to one, but decreases with the coding rate (e.g., around $1.5$ dB for $R_c=0.75$), see Fig. \ref{fig: Outage 2x1 EMI vs Alamouti}.
\begin{figure*}
\begin{subfigure}[b]{0.48\linewidth}
   \centering \includegraphics[width=1.0 \textwidth]{./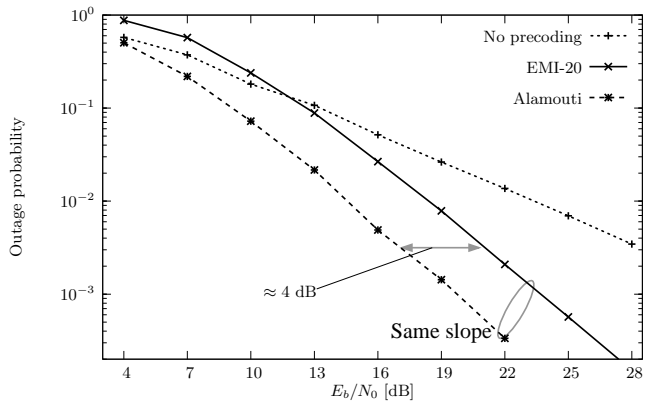}
	\caption{For $R_c=0.9$ (i.e., $R=1.8$ bpcu), the outage probability of the EMI-$20$ code performs at approximately $4$ dB from the Alamouti code.}
	\label{fig: Outage 2Rot nt=nr=2 QPSK}
\end{subfigure}
\qquad
\begin{subfigure}[b]{.48\linewidth}
	\centering  \includegraphics[width=1.0 \textwidth]{./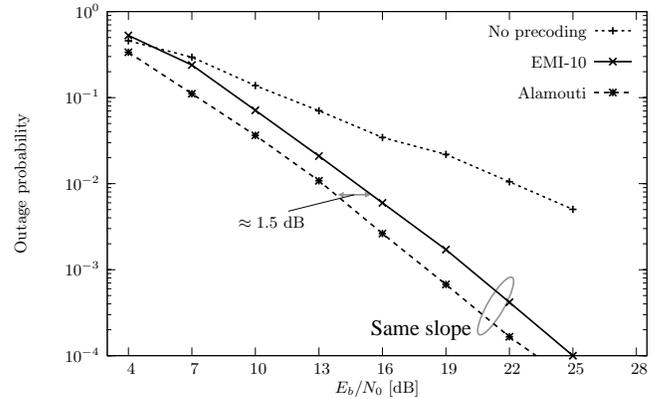}
	\caption{The horizontal SNR-gap with the Alamouti code reduced to approximately $1.5$ dB for $R=1.5$ bpcu ($R_c = 0.75$).}
	\label{fig: Outage 10Rot nt=nr=2 QPSK}
\end{subfigure}	
	\caption{The outage probability of the EMI-$N$ code is compared with the Alamouti STC for the $2\times 1$ MIMO channel with $\Omega_z=4$-QAM and rates $1.8$ bpcu (left) and $1.5$ bpcu (right). Full diversity is achieved but the loss of coding gain is large.}
	\label{fig: Outage 2x1 EMI vs Alamouti}
\end{figure*}
It may be conjectured that the EMI-$N$ code can not compete with the Alamouti code for the $2 \times 1$ MIMO channel, given that the Alamouti code also has a low detection complexity. 

For completeness, we also show the outage probability of the EMI-$N$ code for the $3 \times 1$ MIMO channel (Fig. \ref{fig: Outages_MIMO_nt=3_nr=1_rate=1.8_qpsk.eps}), which corroborates Theorem \ref{prop: full diversity mimo with distr rotation}, i.e., full diversity is achieved, also when ${n_R} <n_T$.
\begin{figure}[!ht]
	\centering
	\includegraphics[width=0.48 \textwidth]{./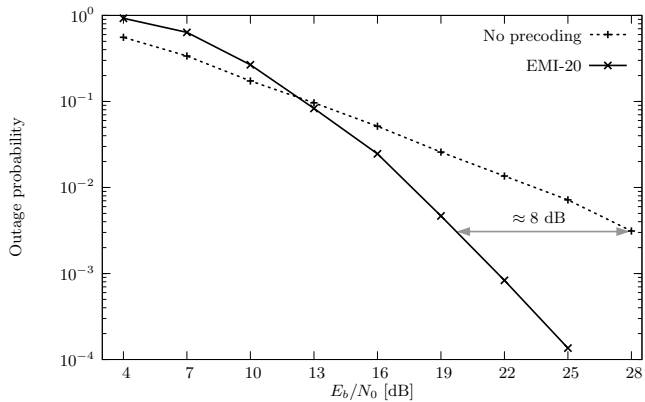}
	\caption{We show that the outage probability of the EMI-$20$ code achieves full diversity on the $3 \times 1$ MIMO channel for $\Omega_z=4$-QAM and $R_c=0.9$ (i.e., $R=1.8$ bpcu).}
	\label{fig: Outages_MIMO_nt=3_nr=1_rate=1.8_qpsk.eps}
\end{figure}

\section{Conclusion}

For unknown channel state information at the transmitter, we proposed a new coded modulation paradigm: outer error-correction code with space-only but time-varying precoding. Full diversity is proved in terms of the outage probability and verified by means of numerical simulations. The proofs are based on the relation between MIMO and parallel channels. For most coding rates, using ten space-only precoders per codeword is sufficient and the error rate performance of LDPC coded modulations approaches the outage probability. The detection complexity of space-only codes is significantly smaller than for approximately universal space-time codes, coming at a price of coding gain loss for coding rates $R_c$ close to one (e.g. $1.5$ dB for $R_c=0.9$), which fortunately vanishes with decreasing coding rate.

\section*{Acknowledgement}

Dieter Duyck thanks Dr. Gareth Amery from the Mathematical Sciences department of the University of Kwazulu Natal, for the interesting discussions on multivariate statistical theory. Dieter Duyck also thank the University of Kwazulu Natal, where the work leading to this paper has taken place.

\appendix
\subsection{Random Unitary Matrices}
\label{app: Haar}

When a unitary matrix is uniformly distributed in the set of all unitary matrices with a particular dimension, then formally, it is uniformly distributed in the Stiefel manifold with respect to the Haar measure. This section is not a part of our contribution, but we briefly summarize the concepts of Stiefel manifold and Haar distribution. For a thorough description, we refer to \cite{choquet1982ama}, \cite{muirhead}.

A unitary matrix $V \in \C^{n_T \times n_T}$ belongs to the Stiefel manifold $\mathcal{M}_{n_T,n_T}$, which is the set of all unitary matrices in $\C^{n_T \times n_T}$ \cite[Sec. III.A]{choquet1982ama} \cite[Sec. 2.1.4]{muirhead} \cite[Sec. 3.2]{edelman1989eac} \cite[Sec. 4.6]{edelman2005rmt}. More specifically, $\mathcal{M}_{n_T,n_T}$ is a subset with $n_T^2$ real dimensions of the inner product space $\C^{n_T \times n_T}$ which has $2n^2$ real dimensions. 

In order to consider a probability density function (pdf) for $V$, a means for integration must be defined first, as the integral of the pdf over the considered space, here the Stiefel manifold, equals one. In the Stiefel manifold, Riemann integration is not applicable so that Lebesgue integration, which is a generalization of Riemann integration to more general spaces, is required \cite[Sec. I.D]{choquet1982ama}. A Riemann integral is the sum of the volumes of hyperrectangles under the considered function and thus partitions the support of that function, while Lebesgue integration partitions the function output, so that it can be used to integrate functions with a support in more general spaces, such as manifolds. However, it requires the definition of a measure \cite[Sec. I.D]{choquet1982ama}, which is a generalized volume element, for example to measure the volume of the support where the function output belongs to a particular interval. A measure $\mu$, defined on a group $G$, is a Haar measure if 
\[\mu(g E) = \mu(E), \forall g \in G, \forall E \subset G,\]
where $E$ is a measurable subset of $G$ \cite[Sec. I.D]{choquet1982ama} and $gE = \{g.e: e \in E \}$. In this case, consider $E$ to be a subset of $\mathcal{M}_{n_T,n_T}$ and $G$ to be the unitary group of degree $n_T$. Consider the probability measure\footnote{A probability measure is a measure which integrates to one over the considered space.}
\[\mathrm{d}\mu = \frac{\mathrm{d}V}{\textrm{Vol}(\mathcal{M}_{n_T,n_T})},\]
where $\mathrm{d}V$ is the differential volume element of $\mathcal{M}_{n_T,n_T}$ ($\int_{\mathcal{M}_{n_T,n_T}} \mathrm{d}V = \textrm{Vol}(\mathcal{M}_{n_T,n_T})$) and $\textrm{Vol}(\mathcal{M}_{n_T,n_T})$ is the total volume of $\mathcal{M}_{n_T,n_T}$ \cite[Sec. 3.2]{edelman1989eac}. It can be shown that this measure is a Haar measure \cite[Sec. 2.1.4]{muirhead} \cite[Sec. 3.2]{edelman1989eac} \cite[Sec. 4.6]{edelman2005rmt}. Furthermore, this probability measure defines the pdf $p(V) = \frac{1}{\textrm{Vol}(\mathcal{M}_{n_T,n_T})}$, which is the uniform distribution in the Stiefel manifold. Hence, the considered Haar measure implies a uniform distribution, and there exists exactly one probability measure which is Haar. Therefore, matrices that are uniform within the Stiefel manifold with respect to the Haar measure are also denoted to be Haar-distributed \cite[Sec. 4.6]{edelman2005rmt}.

Note that Lebesgue integration is often avoided by using Jacobians, which actually determine the differential volume element $\mathrm{d}V$ \cite[Sec. 3.2]{edelman1989eac}. Similarly to the Stiefel manifold $\mathcal{M}_{n_T,n_T}$, one can also consider other Stiefel manifolds $\mathcal{M}_{i,j}$ for some $i$ and $j$.

The following lemma collects some useful properties of unitary matrices.

\begin{lemma}
	Consider the unitary matrices $D, U, V \in \C^{n_T \times n_T}$, where $U$ and $V$ are independent and uniformly distributed in $\mathcal{M}_{n_T,n_T}$, and $D$ is deterministic. Then
	\begin{enumerate}
		\item The linear transformation
	\[f: \C^n_T \rightarrow \C^n_T: f(\mb{b}) = U \mb{b} = \mb{a}, \]	
	is a bijection. Next, $\frac{\mb{a}}{||\mb{b}||}$ is uniformly distributed in $\mathcal{M}_{n_T,1}$ for any $\mb{b}$ independent of $U$.	
		\item $U^\prime=D U$ is uniform in $\mathcal{M}_{n_T,n_T}$.
		\item $U^\prime=V U$ is uniform in $\mathcal{M}_{n_T,n_T}$.
	\end{enumerate}
	\label{lemma: U is a bijection}
	\label{lemma: U properties}
\end{lemma}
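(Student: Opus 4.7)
The plan is to establish the three claims in the order (2), (3), (1), because (3) follows from (2) by a conditioning argument, and the sphere-uniformity part of (1) also rests on (2) together with the bi-invariance of the Haar measure. Throughout, write $\mu$ for the Haar probability measure on $\mathcal{M}_{n_T,n_T}$ recalled earlier in the appendix, and write $\mb{e}_1$ for the first canonical basis vector of $\C^{n_T}$.

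For part (2), I invoke directly the defining left-invariance of $\mu$: for every deterministic unitary $D$ and every measurable $E \subseteq \mathcal{M}_{n_T,n_T}$, $\mu(DE) = \mu(E)$. Since $U \sim \mu$, for any such $E$, $\Prob(DU \in E) = \Prob(U \in D^{-1}E) = \mu(D^{-1}E) = \mu(E)$, so $DU \sim \mu$. For part (3), I condition on $V$: by independence, the conditional law of $U$ given $V=v$ is still $\mu$, so part (2) gives $vU \sim \mu$ for every $v$; since this conditional law does not depend on $v$, marginalizing over $V$ yields $VU \sim \mu$.

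For part (1), the bijection claim is immediate because $U$ is unitary and hence invertible with two-sided inverse $U^\dagger$. For the sphere claim, first note that $\|U\mb{b}\| = \|\mb{b}\|$ since unitary matrices preserve the Euclidean norm, so $\mb{a}/\|\mb{b}\|$ lies in $\mathcal{M}_{n_T,1}$. To establish uniformity, I condition on $\mb{b}$; by independence $U$ is still Haar-distributed. Choose any deterministic unitary $Q$ with $Q\mb{e}_1 = \mb{b}/\|\mb{b}\|$, obtained by completing $\mb{b}/\|\mb{b}\|$ to an orthonormal basis of $\C^{n_T}$. Then $\mb{a}/\|\mb{b}\| = (UQ)\mb{e}_1$. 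By right-invariance of the Haar measure, which holds because $\mu$ is bi-invariant on the compact group $\mathcal{M}_{n_T,n_T}$, the product $UQ$ is still $\mu$-distributed, so $(UQ)\mb{e}_1$ has the same law as the first column $U\mb{e}_1$ of a Haar unitary. Finally, left-invariance applied to any unitary $R$ gives $R(U\mb{e}_1) = (RU)\mb{e}_1$, whose distribution coincides with that of $U\mb{e}_1$ by part (2); hence the law of $U\mb{e}_1$ on the unit sphere $\mathcal{M}_{n_T,1}$ is invariant under the transitive action of the unitary group, and the unique Borel probability measure on $\mathcal{M}_{n_T,1}$ with this property is the uniform/Haar measure on the sphere.

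The only non-elementary ingredients are the bi-invariance of the Haar measure on the compact group $\mathcal{M}_{n_T,n_T}$ and the uniqueness of the unitary-invariant probability measure on the unit sphere as a homogeneous space; both are classical consequences of Haar theory and are already covered by the references cited in the appendix. Consequently, rather than a genuine obstacle, the main care required is notational, namely to keep track of which invariance (left or right) each step uses and to justify the existence of the auxiliary unitary $Q$ sending $\mb{e}_1$ onto $\mb{b}/\|\mb{b}\|$.
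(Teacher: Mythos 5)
Your proposal is correct and rests on the same two pillars as the paper's proof: invariance of the Haar measure under multiplication by unitaries, and transitivity of the unitary group's action on the unit sphere $\mathcal{M}_{n_T,1}$. The execution differs in two respects worth noting. For parts (2) and (3) you argue via the defining left-invariance $\mu(DE)=\mu(E)$ and conditioning, whereas the paper argues that the map $U\mapsto DU$ (resp. $U\mapsto VU$ for fixed $V$) is a bijection and hence ``$p(U^\prime)=p(U)$''; your version is the rigorous form of the same idea, since a bijection alone does not preserve a density without the measure-invariance that Haar supplies. For the sphere claim in part (1) the difference is more substantive: the paper rotates the \emph{realized} direction $\mb{a}/\|\mb{b}\|$ onto an arbitrary target $\mb{c}$ using $Q=CA^\dagger$, where $A$ has $\mb{a}/\|\mb{b}\|$ as its first column --- so $Q$ depends on $U$ --- and then invokes part (3), which is stated for $V$ \emph{independent} of $U$; this leaves a small circularity. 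You avoid it by conditioning on $\mb{b}$, taking a unitary $Q$ with $Q\mb{e}_1=\mb{b}/\|\mb{b}\|$ that is deterministic given $\mb{b}$, writing $\mb{a}/\|\mb{b}\|=(UQ)\mb{e}_1$, appealing to right-invariance (bi-invariance of Haar measure on a compact group, which the paper's lemma does not itself state but which is classical), and finishing with uniqueness of the unitary-invariant probability measure on the sphere. The cost of your route is the need to cite bi-invariance and the uniqueness of the invariant measure on the homogeneous space; the benefit is that every invariance step is applied to a matrix that is genuinely deterministic at the moment it is used.
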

\begin{proof}
	\begin{enumerate}
	\item Any non-singular matrix $U$ guarantees that $U \mb{b}_1 \neq U \mb{b}_2$ for $\mb{b}_1 \neq \mb{b}_2$, as $U^{-1} U \mb{b}_2 \neq \mb{b}_1$, yielding a bijection. \\
		Consider any $\mb{c}$ in $\mathcal{M}_{n_T,1}$. Consider $Q$ so that $\mb{c} = Q \frac{\mb{a}}{||\mb{b}||}$. Matrix $Q$ always exists; more specifically, $Q=C A^\dagger$, where $C$ and $A$ are unitary matrices with $\mb{c}$ and $\frac{\mb{a}}{||\mb{b}||}$ on its first column, respectively. Hence $\mb{c} = Q U \frac{\mb{b}}{|| \mb{b} ||}$, where $Q U$ is uniformly distributed in $\mathcal{M}_{n_T,n_T}$ (see third bullet). Hence $Q U$ can be considered as another realization of the random matrix $U$ (hence with the same probability density by the uniformity) and thus $\mb{c}$ as the corresponding realization of the random $\frac{\mb{a}}{||\mb{b}||}$, with the same probability density. Since this holds for any $\mb{c}$, all unitary vectors have the same probability density, yielding the uniform distribution in $\mathcal{M}_{n_T,1}$.	
	\item For each $D$, there exists exactly one $U$ yielding $U^\prime$ (because $D$ is a bijection), so that $p(U^\prime)=p(U)$.
	\item For each $U$, there exists exactly one $V$ yielding $U^\prime$ because $U$ is a bijection. Hence,
\begin{align}
	p(U^\prime) &= \int_{\mathcal{M}_{n_T,n_T}} p(V = U^\prime U^\dagger| U) p(U) d U \\
	&= \int_{\mathcal{M}_{n_T,n_T}} p(V) p(U) d U \\
	&= p(V) \int_{\mathcal{M}_{n_T,n_T}} p(U) d U = p(V),
\end{align}
	which yields the claim.
	\end{enumerate}
\end{proof}

\subsection{Proof of Lemma \ref{lemma: random prec toy ex}}
\label{sec: proof lemma toy example}

The diversity order is upper bounded by the diversity order of the outage probability, $P_{\out} = \int_0^{2 \pi} p(\theta) P_{\out | \theta} \mathrm{d} \theta$, where $p(\theta)=\frac{1}{2 \pi}$.
Consider $T_\theta = \gamma^{-0.5}$, then 
\begin{align}
	P_{\out} &\geq P_{\out | \theta = \theta_l} \int_0^{T_\theta} p(\theta)  \mathrm{d} \theta + \int_{T_\theta}^{2 \pi} p(\theta) P_{\out | \theta} \mathrm{d} \theta \\
	&\geq P_{\out | \theta = \theta_l} \int_0^{T_\theta}
        p(\theta) \mathrm{d} \theta ~ \doteq ~ P_{\out | \theta = \theta_l} \gamma^{-0.5} , \label{eq: out toy example}
\end{align}
where $\theta_l = \argmin{\theta \in [0,T_\theta]}~ P_{\out | \theta} $. Hence,
$P_{\out | \theta = \theta_l} \doteq \frac{1}{\gamma}$ is a sufficient condition to have that $P_{\out} \dot{\geq} \frac{1}{\gamma^{1.5}}$. 

Let us study the mutual information when $\theta \in [0,T_\theta]$. Similarly as in Eq. \ref{eq: mut info discrete alphabet MIMO form 1}, we express the mutual information between BPSK input and output of the parallel channel in Eq. (\ref{eq: mut info discrete alphabet toy example form 1}), where $\circ$ denotes the component-wise multiplication, also known as the Hadamard product or Schur product, and $\bs{\beta}=[\beta_1 ~\beta_2]$. As was first done in \cite{tse2003dam} and then later in \cite[App. I]{fab2007cmi} for Rayleigh fading, we define the normalized fading gains $\alpha_i = -\frac{\log \beta_i^2}{\log \gamma}$, so that the mutual information can be expressed as in Eq. (\ref{eq: mut info discrete alphabet toy example}), where
\begin{equation}
f(\alpha_i, s_{t,i}, w_{t,i}) = e^{-\gamma^{1-\alpha_i} s_{t,i}^2 - 2 \sqrt{\gamma^{1-\alpha_i}} w_{t,i} s_{t,i}},
\end{equation}
and where $s_{t,i} = (x_{t,i} - x^{\prime}_{t,i})$. 
\begin{figure*}
\begin{align}
	I\left(\mb{x}_t; \mb{y}_t| \beta_1, \beta_2, \theta \right) &=  2 - \frac{1}{4} \sum_{\mb{x}_t \in \Omega_\mb{x}}  \E_{\mb{y}_t|\mb{x}_t} \left[  \Log_2 \left( \sum_{\mb{x}_t^{\prime} \in \Omega_\mb{x}} \exp\left[ \left( d^2(\mb{y}_t,\sqrt{\gamma} \bs{\beta} \circ \mb{x}_t) - d^2(\mb{y}_t, \sqrt{\gamma} \bs{\beta} \circ  \mb{x^{\prime}}_t) \right) \right] \right) \right] \label{eq: mut info discrete alphabet toy example form 1} \\
	I\left(\mb{x}_t; \mb{y}_t| \beta_1, \beta_2, \theta \right) &=  2 - \frac{1}{4} \sum_{\mb{x}_t \in \Omega_\mb{x}}  \E_{\mb{w}_t} \left[ \Log_2 \left( \sum_{\mb{x}_t^\prime \in \Omega_\mb{x}} \prod_{i=1}^{2} f(\alpha_i, s_{t,i}, w_{t,i}) \right) \right], \label{eq: mut info discrete alphabet toy example}
\end{align}
\end{figure*}

We recall the distribution of $\bs{\alpha}=(\alpha_1, \alpha_2)$ \cite[App. I]{fab2007cmi},
\begin{equation}
p(\bs{\alpha}) = (\log \gamma)^2 e^{-\gamma^{-\alpha_1}-\gamma^{-\alpha_2}} \gamma^{-\alpha_1 - \alpha_2}.
\end{equation}
We denote $\alpha_i$ as an $\epsilon$-bad fading gain when $\alpha_i \geq 1+\epsilon$ and as an $\epsilon$-good fading gain when $\alpha_i \leq 1-\epsilon$, for $\epsilon > 0$.

The outline of the remainder of the proof is as follows. We first determine a region $\mathcal{A}$ in the space of $(\alpha_1, \alpha_2)$ where the mutual information, given a fading point in that region, is smaller than the spectral efficiency (the spectral efficiency $R = m R_c$ equals $2 R_c$ in this case) corresponding with a coding rate $R_c > 0.5$. The outage probability will then be given by the integral $\int_\mathcal{A} p(\bs{\alpha}) \mathrm{d} \bs{\alpha}$. As we are only interested in the SNR-exponent of the outage probability, we can simplify the integral following the same lines as in \cite[Sec. III-B]{tse2003dam}, \cite{fab2007cmi}, by replacing $p(\bs{\alpha})$ by $q(\bs{\alpha})$, 
\begin{equation}
q: \R^{2,+} \rightarrow \R: q(\bs{\alpha}) = \gamma^{-\alpha_1 - \alpha_2},
\end{equation}
because $\log(\gamma)^2 \dot{=} 1$, $e^{-\gamma^{-\alpha_1}}$ decreases exponentially with $\gamma$ if $\alpha_i < 0$ and approaches $1$ and $e$ for $\alpha_i > 0$ and $\alpha_i=0$ respectively, so that we only have to consider $\bs{\alpha} \in \R^{2,+}$ where we can drop the exponential term. The set $\R^{2,+}$ corresponds to all positive real-valued two-dimensional vectors. Hence, 
\begin{equation}
\int_\mathcal{A} p(\bs{\alpha}) \mathrm{d} \bs{\alpha} \dot{=} \int_{\mathcal{A} \cap \R^{2,+}} q(\bs{\alpha}) \mathrm{d} \bs{\alpha}.
\end{equation}

As in \cite{fab2007cmi}, we can apply the dominated convergence theorem \cite[Theorem 1.6.7]{dur1991pta} to determine the diversity order (Eq. (\ref{eq: div order}))
\begin{multline}
	\lim_{\gamma \rightarrow \infty} \E_{\mb{w}_t} \left[ \Log_2 \left( \sum_{\mb{x}^{\prime}_t \in \Omega_\mb{x}} \prod_{i=1}^2 f(\alpha_i, s_{t,i}, w_{t,i}) \right) \right] \\
	= \E_{\mb{w}_t} \left[ \lim_{\gamma \rightarrow \infty} \Log_2 \left( \sum_{\mb{x}^{\prime}_t \in \Omega_\mb{x}} \prod_{i=1}^2 f(\alpha_i, s_{t,i}, w_{t,i}) \right) \right]
\end{multline}

When $\theta \rightarrow 0$, we have that for each $\mb{x}_t$ and $i$,
\begin{equation}
	\min_{\mb{x}_t^\prime} |s_{t,i}| = \sqrt{2} (\theta - O(\theta^3)).
\end{equation}
More specifically, for each $\mb{x}_t$ and $i$, there exists exactly one $\mb{x}_t^\prime \neq \mb{x}_t$ so that $|s_{t,i}| < \sqrt{2} T_\theta$ when $\theta \in [0, T_\theta]$ \footnote{Note that it is impossible to have $|s_{t,i}| \rightarrow 0,$ for $i=1,2$ (see Lemma \ref{lemma: no $n_T$ zero elements}).}. As a consequence, $|s_{t,i}| \dot{\leq} T_\theta$ and it is easy to verify that $\lim_{\gamma \rightarrow \infty} f(\alpha_i, |s_{t,i}| \dot{\leq} T_\theta, w_{t,i}) = 1$ if $\alpha_i > 0$. The mutual information for large $\gamma$ can now be written as
\[I\left(\mb{x}_t; \mb{y}_t | \beta_1, \beta_2, \theta \right) =  2 - \frac{1}{4} \sum_{\mb{x}_t \in \Omega_\mb{x}}  \E_{\mb{w}_t} \left[ \Log_2 \left( 1+ g(\bs{\alpha}, \mb{x}_t) \right) \right],\]
where, for all $\mb{x}_t$,
\begin{equation}
g(\bs{\alpha}, \mb{x}_t) = \left\{ 
	\begin{array}{lr} 
		O\left( \exp\left[-\gamma^{\epsilon}\right] \right), & \textrm{if } \alpha_1, \alpha_2 < 1, \\
		\Omega\left( \exp\left[-\gamma^{-\epsilon}\right] \right) & \textrm{if } \alpha_1 > 1 \textrm{ or } \alpha_2 > 1,
	\end{array}
	\right. 
\end{equation}
where $\epsilon = |1-\max_i \alpha_i|$. Hence, if there exists one $\epsilon$-bad fading gain, $\alpha_i \geq 1+\epsilon$, then it can be shown that the mutual information is $1-\Omega \left(\gamma^{-\epsilon}  \right)$, thus strictly smaller than one. Therefore, 
$\Prob(I\left(\mb{x}_t; \mb{y}_t | \bs{\alpha}, \theta \leq T_\theta \right) < 2 R_c) = 1$
when $R_c > 0.5$ and $\bs{\alpha} \in \mathcal{A}_\epsilon = \{\bs{\alpha}: \sum_{i=1}^2 \ind\{\alpha_i \geq 1+\epsilon\} \geq 1 \}$ ($\ind\{.\}$ is the indicator function). As a consequence, for every $\epsilon>0$ and $R_c > 0.5$, we have that
\begin{equation}
	P_{\out | \theta = \theta_l} \geq \Prob(\mathcal{A}_\epsilon) \dot{=} \int_{\mathcal{A}_\epsilon \cap \R^{2,+}} q(\bs{\alpha}) \mathrm{d} \bs{\alpha}.
\end{equation}
Following the same lines as in \cite[Theorem 4]{tse2003dam}, \cite[App. I]{fab2007cmi}, the SNR exponent $d_{\out|\theta = \theta_l}$ is 
\begin{equation}
	d_{\out|\theta = \theta_l} \leq d_{\out}(\epsilon) = \inf_{\mathcal{A}_\epsilon \cap \R^{2,+}} \sum_{i=1}^2 \alpha_i = 1+\epsilon.
	\label{eq: div order toy example}
\end{equation}
This holds for any $\epsilon>0$, and the bound in Eq. (\ref{eq: div order toy example}) can be made tight taking the infinum $\inf_\epsilon d_\out(\epsilon)$ (see e.g. \cite{fab2007cmi, nguyen2007atl, fab2007mcm}). Combining with (\ref{eq: out toy example}), we obtain $d_\out \leq 0.5+ d_{\out|\theta = \theta_l} \leq 1.5$.

\subsection{Bad precoders for $\Omega_z=4$-QAM and $2 \times 2$ MIMO}
\label{app: bad precoders for QPSK}
 
Given $\Omega_z$, all bad precoders can be identified, for example by listing all possible difference vectors $\mb{z} - \mb{z}^\prime$ and consequently determining the structure of the precoder so that $\exists~ i$ where $s_{t,i}=0$. It can be verified that for $\Omega_z = 4-$QAM (corresponding to $m=4$), the \textit{bad} precoders $V_\bad$ is the set in Eq. (\ref{eq: bad precoders 4-QAM}), where $\omega, \psi, \phi, \theta \in [0, 2 \pi]$ and where 
\begin{eqnarray}
	\delta &=& \psi + k_\phi \frac{\pi}{2}, ~~~ k_\phi \in \{0, \ldots, 3\} \\
	\eta &=& \psi + \frac{\pi}{4} + k_\phi \frac{\pi}{2}, ~~~ k_\phi \in \{0, \ldots, 3\} \\
	\iota &=& \theta + (k_\phi-2) \frac{\pi}{2}.
\end{eqnarray}
\begin{figure*}
	\begin{equation}
	V_\bad  = \left\{ \begin{bmatrix} 0 & e^{j\omega} \\ e^{j\psi} & 0 \end{bmatrix}, \begin{bmatrix} e^{j\phi} & 0 \\ 0 & e^{j\theta} \end{bmatrix}, \frac{1}{\sqrt{2}} \begin{bmatrix} e^{j\delta} & e^{j\iota} \\ e^{j\psi} &e^{j\theta} \end{bmatrix}, \frac{1}{\sqrt{3}} \begin{bmatrix} e^{j\eta} & \sqrt{2} e^{j\iota} \\ \sqrt{2} e^{j\psi} &e^{j\theta} \end{bmatrix}, \frac{1}{\sqrt{3}} \begin{bmatrix} \sqrt{2} e^{j\eta} & e^{j\iota} \\ e^{j\psi} & \sqrt{2} e^{j\theta} \end{bmatrix} \right\}.
	\label{eq: bad precoders 4-QAM}
	\end{equation}
\end{figure*}
For each precoder $V_t \in V_\bad$, the maximum mutual information $I(\mb{x}_t;\mb{y}_t|V_t)$ can be determined. It can be verified that for the first two types of \textit{bad} precoders in Eq. (\ref{eq: bad precoders 4-QAM}), the maximal mutual information is $\frac{m}{2} = 2$. This laborious study of bad precoders is only necessary to analyze the case with the time-varying precoder when a small finite number of unitary matrices is used. Fortunately, the average mutual information when $N$ precoders are used rapidly converges to the expectation of the mutual information, $\E_t[I(\mb{x}_t;\mb{y}_t|V_t)]$, which corresponds to $N \rightarrow \infty$ (see Sec. \ref{sec: Maximal coding rate for EMI-N code}). In Theorem \ref{prop: full diversity mimo with distr rotation}, we assume that $N \rightarrow \infty$ and thus only consider $\E_t[I(\mb{x}_t;\mb{y}_t|V_t)]$. The convergence properties to this case for finite $N$ are discussed in Sec. \ref{sec: Maximal coding rate for EMI-N code}.

\subsection{Probability of corrupt precoders}
\label{app: probability of corrupt precoders}

Before we determine the probability of corrupt precoders, we consider the following Lemma. 
\begin{lemma}
	\label{lemma: beta distribution}
	Consider two independent random variables $X$ and $Y$ which are up to a constant $\chi^2$-distributed with parameter $2 a$ and $2 b$, respectively. Then the random variable 
\[Z = \frac{X}{X+Y} \sim \beta(a,b), \]
	where $\beta(a,b)$ is the Beta-distribution with parameters $a$ and $b$, respectively.
\end{lemma}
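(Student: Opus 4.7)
The plan is to reduce the claim to the well-known Gamma/Beta relationship and then perform a direct change-of-variables computation. First, recall that a $\chi^2$-random variable with $2a$ degrees of freedom has the same distribution as $2G$, where $G\sim\Gamma(a,1)$ (shape $a$, rate $1$). Since any constant multiplicative factor applied jointly to the numerator and denominator of $Z=X/(X+Y)$ cancels out, the ratio is invariant under such rescalings. Therefore, without loss of generality, I may assume $X\sim\Gamma(a,1)$ and $Y\sim\Gamma(b,1)$ independently.

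Next, I would introduce the auxiliary variable $W=X+Y$ and consider the transformation $(X,Y)\mapsto(Z,W)$ with $Z=X/(X+Y)$. This map is a bijection from $(0,\infty)^2$ onto $(0,1)\times(0,\infty)$, with inverse $X=ZW$, $Y=(1-Z)W$ and Jacobian determinant equal to $W$. Substituting into the joint density
\[
f_{X,Y}(x,y)=\frac{x^{a-1}y^{b-1}e^{-(x+y)}}{\Gamma(a)\Gamma(b)},\quad x,y>0,
\]
I obtain
\[
f_{Z,W}(z,w)=\frac{z^{a-1}(1-z)^{b-1}}{\Gamma(a)\Gamma(b)}\;w^{a+b-1}e^{-w},
\]
valid for $z\in(0,1)$ and $w>0$.

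The key observation is that this joint density factors into a function of $z$ alone and a function of $w$ alone, which simultaneously shows that $Z$ and $W$ are independent and identifies the marginal of $Z$ up to a normalizing constant. Integrating out $w$ via $\int_0^\infty w^{a+b-1}e^{-w}\,dw=\Gamma(a+b)$ yields
\[
f_Z(z)=\frac{\Gamma(a+b)}{\Gamma(a)\Gamma(b)}\,z^{a-1}(1-z)^{b-1},\quad z\in(0,1),
\]
which is precisely the density of $\beta(a,b)$.

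There is no real obstacle here; the only thing to be careful about is the opening reduction ``up to a constant,'' which must be handled before the Jacobian step so that the integrand is clean. The argument is otherwise a textbook application of the change-of-variables formula together with the factorization criterion for independence.
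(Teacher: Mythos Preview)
Your argument is correct. Both you and the paper reduce to the Gamma--Beta relationship by identifying $\chi^2(2a)$ with a Gamma law and noting that the ratio $X/(X+Y)$ is scale-invariant, so the approaches are essentially the same; the only difference is that the paper simply cites this relationship from a reference, whereas you actually carry out the standard $(X,Y)\mapsto(Z,W)$ change of variables and marginalization to exhibit the $\beta(a,b)$ density explicitly.
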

\begin{proof}
	$X \sim \Gamma(a, \theta=2) \Leftrightarrow X \sim \chi^2(2a)$. 
	A property of Beta-distributions \cite[Chapt. 25]{johnson1995cud} is that if $X \sim \Gamma(a, \theta=2)$ and $Y \sim \Gamma(b, \theta=2)$, then $Z = \frac{X}{X+Y} \sim \beta(a,b)$.
\end{proof}

Consider the unitary matrix $V_t$, which is uniformly distributed in the set of all unitary matrices $\mathcal{M}(n_T,n_T)$  (see App. \ref{app: Haar}). Consider any pair $\mb{z}, ~\mb{z}^\prime \in \Omega_{\mb{z}}$ and define $\mb{v}_t = V_t \frac{\mb{z}-\mb{z}^\prime}{|| \mb{z}-\mb{z}^\prime ||}$, so that $\mb{s}_t = || \mb{z}-\mb{z}^\prime || \mb{v}_t$. It follows from Lemma \ref{lemma: U properties} that $\mb{v}_t$ is uniform in $\mathcal{M}_{n_T,1}$. As a consequence of the uniform distribution, it is shown in \cite[Theorem 3.1]{edelman1989eac} that $\mb{v}_t$ can be constructed as 
\begin{equation}
\label{eq: construction of random v}
	\mb{v}_t = \frac{\mb{g}}{||\mb{g}||},
\end{equation} 
where $\mb{g} \sim \Co\N(0,I)$. 
Hence, 
\[\sum_{i=2}^{n_T} |v_{t,i}|^2 = \frac{\sum_{i=2}^{n_T} |g_i|^2}{ \sum_{i=1}^{n_T} |g_i|^2} ~~\textrm{ and }~~ |v_{t,i}|^2 = \frac{|g_i|^2}{ \sum_{i=1}^{n_T} |g_i|^2},\] 
where $\mb{g} \sim \Co\N(0,I)$. Because $|g_i|^2 \sim \chi^2(2)$ and $\sum_{i=2}^{n_T} |g_i|^2 \sim \chi^2(2(n_T-1))$ (up to a constant), we have that $\sum_{i=2}^{n_T} |v_{t,i}|^2 \sim \beta(n_T-1,1)$ and $|v_{t,i}|^2 \sim \beta(1, n_T-1)$ by Lemma \ref{lemma: beta distribution}. Now, we can determine the probability that $\{|s_{t,i}|^2 \leq \gamma^{-1},~ i>1\}$ for this given pair $\mb{z}, ~\mb{z}^\prime \in \Omega_{\mb{z}}$, denoted by $\textrm{PCP}$ (pairwise corruption probability), as
\begin{equation}
	\textrm{PCP} = \Prob(|s_{t,2}|^2 \leq \gamma^{-1}, \ldots, |s_{t,n_T}|^2 \leq \gamma^{-1})
\end{equation}
The PCP is upper and lower bounded as follows
\begin{align}
	\Prob \left(\sum_{i=2}^{n_T} |s_{t,i}|^2 \leq \gamma^{-1} \right) &\leq \textrm{PCP} \nonumber \\ 
	&\leq \Prob \left(\sum_{i=2}^{n_T} |s_{t,i}|^2 \leq (n_T-1) \gamma^{-1} \right).
\end{align}
Because $\mb{v}_t$ is proportional to $\mb{s}_t$, we have
\begin{align}
	\Prob \left(\sum_{i=2}^{n_T} |v_{t,i}|^2 \leq \gamma^{-1} \right) &\dot{\leq} \textrm{PCP} \nonumber \\
	&\dot{\leq} \Prob \left(\sum_{i=2}^{n_T} |v_{t,i}|^2 \leq (n_T-1) \gamma^{-1} \right),
\end{align}
which is 
\begin{equation}
	\gamma^{-(n_T-1)} \dot{\leq} \textrm{PCP} \dot{\leq} \gamma^{-(n_T-1)},
\end{equation}
by App. \ref{app: Asymptotic cumulative distribution function of Beta distribution}. Hence, 
\begin{equation}
	\textrm{PCP} \doteq \gamma^{-(n_T-1)}.
\end{equation}

The probability of a corrupt precoder is bounded as 
\begin{equation}
	\textrm{PCP} \leq \Prob(\Set_{c,1}) \leq \frac{2^m (2^m-1)}{2} \textrm{PCP} \label{bound prob of corrupt precoder}
\end{equation}
yielding
\begin{equation}
	\Prob(\Set_{c,1})  \doteq \gamma^{-(n_T-1)}.	
\end{equation}
The lower bound in (\ref{bound prob of corrupt precoder}) is because the PCP only considers one of the $\frac{2^m (2^m-1)}{2}$ pairs $\mb{z}, ~\mb{z}^\prime \in \Omega_{\mb{z}}$. The upper bound neglects the correlation between the different pairs $\mb{z}, ~\mb{z}^\prime \in \Omega_{\mb{z}}$.

The proof for $\Prob(\Set_{c,2})$ follows the same lines by considering the components $\{s_{t,i}, i=1, \ldots, r\}$ instead of $\{s_{t,i}, i=2, \ldots, n\}$.

\subsection{Asymptotic cdf of Beta distribution}
\label{app: Asymptotic cumulative distribution function of Beta distribution}

We are interested in $\Prob(y<x)$ for small $x$, where $y \sim \beta(1, n_T-1)$ or $y \sim \beta(n_T-1, 1)$. The cdf of a Beta-distributed random variable $Y$ with parameters $a$ and $b$ is
\begin{equation}
	F_Y(x, a, b) = \int_0^x \frac{\Gamma(a+b)}{\Gamma(a)\Gamma(b)} x^{a-1} (1-x)^{b-1} \mathrm{d}x.
\end{equation}
Hence, 
\begin{align}
	F_Y(x, 1, n_T-1) &= -(1-x)^{n_T-1} +1 = (n_T-1) x - O(x^2) \label{cdf Beta 1}\\
	F_Y(x, n_T-1, 1) &= x^{n_T-1}
\end{align}
where it is assumed that $|x|<1$ in (\ref{cdf Beta 1}).

\subsection{Proof of Lemma \ref{lemma: no full diversity mimo}}
\label{app: proof lemma no full div mimo}

The achievability is shown by simply using one transmit antenna and maximum-ratio combining (ML detection) at the receiver. For the converse, we lower bound the outage probability as follows.
\begin{align}
	P_{\out} &= \int_{\Set_c} p(V^\prime) P_{\out | V^\prime}  \mathrm{d} V^\prime + \int_{\bar{\Set_c}} p(V^\prime)   P_{\out | V^\prime} \mathrm{d} V^\prime \\
	&\geq P_{\out | V^\prime = V_l} \int_{\Set_c} p(V^\prime)  \mathrm{d} V^\prime  = P_{\out | V^\prime = V_l} \Prob(\Set_c)
\end{align}
where $\bar{\Set_c}$ is the complement of $\Set_c$ and $V_l = \argmin{V^\prime \in \Set_c}~ P_{\out | V^\prime}$.

First, let us study $P_{\out | V^\prime = V_l}$, which is the outage probability given that $V^\prime = V_l$. The outline of the remainder of the proof is the same as in the proof of Lemma \ref{lemma: random prec toy ex} (App. \ref{sec: proof lemma toy example}). We determine a region $\mathcal{A}$ in the space of $\bs{\alpha}$ where the mutual information (given in Eq. \ref{eq: mut info discrete alphabet MIMO}) is smaller than a particular spectral efficiency $R$. The outage probability, given by $\int_\mathcal{A} p(\bs{\alpha}) \mathrm{d} \bs{\alpha}$, can then be simplified following the same lines as in \cite[Sec. III-B]{tse2003dam}, \cite{fab2007cmi} because we are only interested in its SNR-exponent. More specifically, it can be verified that \cite[Sec. III-B]{tse2003dam}, \cite{fab2007cmi}
\begin{equation}
\int_\mathcal{A} p(\bs{\alpha}) \mathrm{d} \bs{\alpha} \dot{=} \int_{\mathcal{A} \cap \R^{\min(n_T,{n_R}),+}} q(\bs{\alpha}) \mathrm{d} \bs{\alpha},
\end{equation}
where 
\begin{align}
& q: \R^{\min(n_T,{n_R}),+} \rightarrow \R: \nonumber \\
& q(\bs{\alpha}) = \prod_{i=1}^{\min({n_R},n_T)} \gamma^{-(2 i -1 +|{n_R}-n_T|)\alpha_i}.
\end{align}

In order to determine the diversity order (Eq. (\ref{eq: div order})), we can apply the dominated convergence theorem \cite[Theorem 1.6.7]{dur1991pta}, \cite[App. 2]{fab2007cmi}, so that 
\begin{multline}
	\lim_{\gamma \rightarrow \infty} \E_{\mb{w}} \left[ \Log_2 \left( \sum_{\mb{x}^\prime \in \Omega_\mb{x}} \prod_{i=1}^{n_T} f(\alpha_i, s_{i}, w_{i}) \right) \right] \\
	= \E_{\mb{w}} \left[ \lim_{\gamma \rightarrow \infty} \Log_2 \left( \sum_{\mb{x}^\prime \in \Omega_\mb{x}} \prod_{i=1}^{n_T} f(\alpha_i, s_{i}, w_{i}) \right) \right]
\end{multline}

Consider first the case that ${n_R} \geq n_T$. For corrupt precoders, $\exists~\mb{z}_c, \mb{z}_c^\prime \neq \mb{z}_c$, satisfying $|s_{i}| \leq \gamma^{-0.5}, \forall i > 1$. In the case of a corrupt precoder, we observe that for large $\gamma$ and $i >1$, $\lim_{\gamma \rightarrow \infty} f(\alpha_i, s_{i}, w_{i}) = 1$ if $\alpha_i > 0$. For large $\gamma$, the instantaneous mutual information is 
\[I\left(\mb{x}; \mb{y} | \Sigma, V^\prime \in \Set_{c,1} \right) =  m - 2^{-m} \sum_{\mb{x} \in \Omega_\mb{x}}  \E_{\mb{w}} \left[ \Log_2 \left( 1+ g(\bs{\alpha}, \mb{x}) \right) \right],\]
where 
\[g(\bs{\alpha}, \mb{x}_c) = \left\{ 
	\begin{array}{lr} 
		O\left( e^{-\gamma^{1-\alpha_1}} \right), & \textrm{if } \alpha_{1} < 1 \\
		\Omega\left( e^{-\left(\frac{1}{\gamma} \right)^{\alpha_{1}-1}} \right), & \textrm{if } \alpha_1 > 1
	\end{array}
	\right. ,
\]
and $\mb{x}_c = V^\prime \mb{z}_c$. Note that $g(\bs{\alpha}, \mb{x}) = \sum_{\mb{x}^\prime \neq \mb{x}} \prod_{i=1}^{n_T} f(\alpha_i, s_{i}, w_{i})$.

As a consequence, in the event that $\bs{\alpha} \in \mathcal{A}_{\epsilon, 1}$, the instantaneous mutual information is $m-\Omega\left(\exp\left[-\gamma^{-\epsilon}\right] \right)$, where 
\begin{equation}
\mathcal{A}_{\epsilon, 1} = \{ \bs{\alpha}: \alpha_1 \geq \ldots \geq \alpha_{n_T} > 0: \alpha_1 \geq 1 + \epsilon \}, 
\end{equation}
and $\epsilon > 0$. Hence, there exists a coding rate so that the spectral efficiency $m R_c$ is always larger than or equal to the mutual information, given that $\bs{\alpha} \in \mathcal{A}_{\epsilon, 1}$. For this coding rate, the outage probability is
\begin{equation}
	P_{\out | V^\prime \in \Set_{c,1}} \geq \int_{\mathcal{A}_{\epsilon, 1}} p(\bs{\alpha}) \mathrm{d} \bs{\alpha} \dot{=} \int_{\mathcal{A}_{\epsilon, 1}} q(\bs{\alpha}) \mathrm{d} \bs{\alpha}.
	\label{eq: b_epsilon outage}
\end{equation}
Noting that $\Prob(\Set_{c,1}) \dot{=} \gamma^{-(n_T-1)}$ and thus $P_{\out} \dot{\geq} \frac{1}{\gamma^{n_T-1}} P_{\out | V^\prime \in \Set_{c,1}}$; and following the same lines as in \cite[Theorem 4]{tse2003dam}, \cite{fab2007mcm}
, we obtain the SNR-exponent
\begin{equation}
	d_\out \leq d_\out(\epsilon) =  n_T-1+ \inf_{\bs{\alpha} \in \mathcal{A}_{\epsilon,1}} \sum_{i=1}^{n_T} (2 i -1 + {n_R}-n_T) \alpha_i.
	\label{eq: div order lemma}
\end{equation}
The infinum is $({n_R}-n_T+1)(1+\epsilon)$, which is achieved when $\alpha_{n_T} = \ldots = \alpha_2 = 0$ and $\alpha_1 = 1+\epsilon$. This holds for each $\epsilon>0$, and the bound in Eq. (\ref{eq: div order lemma}) can be made tight taking the infinum $\inf_\epsilon d_\out(\epsilon)$ (see 
\cite{fab2007mcm}), hence we obtain $d_\out \leq {n_R}$.	

Now consider the case that ${n_R} < n_T$. For corrupt precoders, $\exists~\mb{z}_c, \mb{z}^\prime \neq \mb{z}_c$, satisfying $|s_i|^2 \leq \gamma^{-1-\epsilon_i},~ \epsilon_i \geq 0, ~\forall~ i =1, \ldots, {n_R}$. In the case of a corrupt precoder, we observe that for large $\gamma$, $\lim_{\gamma \rightarrow \infty} f(\alpha_i, s_i, w_i) = 1$ if $\alpha_i > 0$. For large $\gamma$, the instantaneous mutual information is 
\[I\left(\mb{x}; \mb{y} | \Sigma, V^\prime \in \Set_{c,2} \right) =  m - 2^{-m} \sum_{\mb{x} \in \Omega_\mb{x}}  \E_{\mb{w}} \left[ \Log_2 \left( 1+ g(\bs{\alpha}, \mb{x}) \right) \right],\]
where 
\begin{equation}
g(\bs{\alpha}, \mb{x}_c) = \Omega\left( e^{-\left(\frac{1}{\gamma} \right)^{\max_i (\alpha_{i}+\epsilon_i)}} \right), ~~~ \mb{x}_c = V^\prime \mb{z}_c.
\end{equation}

Hence, for large $\gamma$, the mutual information is therefore $m-\Omega\left(e^{-\gamma^{-\max_i (\alpha_{i}+\epsilon_i)}} \right)$, for all $\bs{\alpha} \in \R^{{n_R},+}$. Hence, there exists a coding rate so that the spectral efficiency $m R_c$ is always larger than or equal to the mutual information. For this coding rate, the outage probability $P_{\textrm{out} | V^\prime \in \Set_{c,2}} \dot{=} 1$. Noting that $\Prob(\Set_{c,2}) \dot{=} \gamma^{-{n_R}}$ and thus $P_{\textrm{out}} \dot{\geq} \frac{1}{\gamma^{{n_R}}} P_{\textrm{out} | V^\prime \in \Set_{c,2}}$, we obtain that $d_\out \leq {n_R}$.

\subsection{Proof of Theorem \ref{prop: full diversity mimo with distr rotation}}
\label{app: full diversity mimo with distr rotation}

Similarly to $\Set_c$, we define a larger set $\Set_{c,3}$, which is the set of precoders $V_t$ so that $\exists~i \leq \min(n_T,{n_R}), \mb{z}, \mb{z}^\prime \neq \mb{z}$, satisfying $|s_{t,i}|^2 \leq (\log \gamma)^{-p}$, for any $p>0$. The probability $\Prob(V_t \in \Set_{c,3}) \rightarrow 0$ for large $\gamma$ (see App. \ref{app: probability second corrupt precoder set}). Denoting $I(\mb{x}_t; \mb{y}_t | \Sigma, V_t)$ as $I(V_t)$, we can write $\E_t \left[ I(V_t) \right]$ as
\begin{align}
\E_t \left[ I(V_t) \right] &=	\int_{\Set_{c,3}} p(V_t) I(V_t) \mathrm{d}V_t + \int_{\bar{\Set}_{c,3}} p(V_t) I(V_t) \mathrm{d}V_t. \label{sample mean in app}\\
 &\geq I_{\inner} = I(V_l) \int_{\bar{\Set}_{c,3}} p(V_t) \mathrm{d}V_t =I(V_l) (1-\Prob(\Set_{c,3})), \nonumber
\end{align}
where  $V_l = \argmin{V_t \in \bar{\Set}_{c,3}} ~ I(V_t)$ (worst case). Thus, we have that 
\begin{equation}
	P_\out = \Prob(\E_t \left[ I(V_t) \right] \leq R) \leq \Prob(I_\inner \leq R),
\end{equation}
By definition of $\bar{\Set}_{c,3}$, $|s_{t,i}| > (\log \gamma)^{-p}$, so that for large $\gamma$,
\begin{equation}
	f(\alpha_i, s_{t,i}, w_{t,i}) = 
	\left\{ 
	\begin{array}{lr} 
	O\left( e^{-\gamma^{1-\alpha_i}} \right), & \alpha_i < 1 \\
	\Omega\left( e^{-\left( \frac{1}{\gamma} \right)^{\alpha_i-1}} \right), & \alpha_i>1.
	\end{array}
	\right. 
	\label{eq: good or bad fading gains}
\end{equation}
Hence, if $\sum_{i=1}^{\min(n_T,{n_R})} \ind\{\alpha_i <1\} \geq 1$ ($\ind\{.\}$ is the indicator function), then $I(V_l) \rightarrow m$ for large $\gamma$. More specifically, $I(V_l) \rightarrow m$ for large $\gamma$ when $\bs{\alpha} \in \mathcal{A}_\epsilon$, where
\begin{equation}
\mathcal{A}_\epsilon = \{\bs{\alpha}: \sum_{i=1}^{\min(n_T,{n_R})} \ind\{\alpha_i <1-\epsilon\} \geq 1 \},
\end{equation}
for $\epsilon>0$. Note that the fading gains are ordered, so that this set is equivalent to 
\[\mathcal{A}_\epsilon = \{\bs{\alpha}: \alpha_1 \geq \ldots \geq \alpha_{\min(n_T,{n_R})} > 0: \alpha_{\min(n_T,{n_R})} < 1-\epsilon \}, \epsilon>0.\]
As a consequence, $I(V_l | \bs{\alpha} \in \mathcal{A}_\epsilon) = m - O \left(e^{-\gamma^\epsilon} \right)$. Note that for large $\gamma$ and $R_c = 1-\epsilon_2$, $\epsilon_2 > 0$, 
\begin{equation}
	\lim_{\gamma \rightarrow \infty} \Prob(I(V_l | \bs{\alpha} \in \mathcal{A}_\epsilon) (1-\Prob(\Set_{c,3}) \leq R_c m) = 0.
	\label{eq: maximal mutual info full div emi}
\end{equation}
In other words, $\Prob(I_\inner \leq R) \leq \Prob(\bar{\mathcal{A}_\epsilon})$. Hence, for large $\gamma$ and any $\epsilon, \epsilon_2>0$, the outage probability is upper bounded by $\Prob(\bar{\mathcal{A}_\epsilon})$,
where $\bar{\mathcal{A}_\epsilon} = \{ \bs{\alpha}: \sum_{i=1}^{\min(n_T,{n_R})} \ind\{\alpha_i \geq 1-\epsilon\} = {\min(n_T,{n_R})} \}$ or 
$\bar{\mathcal{A}_\epsilon} = \{ \bs{\alpha}: \alpha_1 \geq \ldots \geq \alpha_{\min(n_T,{n_R})} > 0: \alpha_{\min(n_T,{n_R})} \geq 1-\epsilon \}.$ 
Following the same lines as before (or see \cite{fab2007mcm, tse2003dam}), 
\begin{equation}
	d_\out \geq \sup_{\epsilon} \inf_{\bs{\alpha} \in \bar{\mathcal{A}_\epsilon}} \sum_{i=1}^{\min(n_T,{n_R})} (2i -1 + |{n_R}-n_T|) \alpha_i 
\end{equation}
By letting $\epsilon_2 \rightarrow 0^+$, we obtain $d_\out = n_T {n_R}$ for any $R_c < 1$.

\subsection{Probability of $\Set_{c,3}$}
\label{app: probability second corrupt precoder set}

Similarly as in App. \ref{app: probability of corrupt precoders}, we consider $\mb{v}_t = V_t \frac{\mb{z}-\mb{z}^\prime}{|| \mb{z}-\mb{z}^\prime ||}$, uniform in the Stiefel manifold $\mathcal{M}_{n_T,1}$, so that $\mb{s}_t = || \mb{z}-\mb{z}^\prime || \mb{v}_t$; and we consider the PCP so that 
\begin{equation}
	\Prob(\Set_{c,3}) \leq \frac{2^m (2^m-1)}{2} \textrm{PCP} \label{bound prob of corrupt precoder Sc3}
\end{equation}
where in this case
\begin{equation}
	\textrm{PCP} = \Prob \left(|s_{t,i}|^2 \leq (\log \gamma)^{-p} \right)
\end{equation}
for at least one $i \in [1, \ldots, \min(n_T, n_R)]$. Hence
\begin{align}
	\textrm{PCP} &\leq \sum_{i=1}^{\min(n_T,{n_R})} \Prob \left( |s_{t,i}|^2 \leq (\log \gamma)^{-p} \right) \\
	&{\dot{=}}  \sum_{i=1}^{\min(n_T,{n_R})} \Prob \left( |v_{t,i}|^2 \leq (\log \gamma)^{-p} \right) \label{eq 2 in app S_c3} \\
	&= \sum_{i=1}^{\min(n_T,{n_R})} (n_T-1) (\log \gamma)^{-p} - O((\log \gamma)^{-2p} ), \label{eq: Sc prob}
\end{align}
where (\ref{eq 2 in app S_c3}) follows from $\mb{s}_t$ being proportional to $\mb{v}_t$, where (\ref{eq: Sc prob}) follows from App. \ref{app: Asymptotic cumulative distribution function of Beta distribution} noting that $|v_{t,i}|^2 \sim \beta(1, n_T-1)$ (see App. \ref{app: probability of corrupt precoders}). Hence, for large $\gamma$, $\Prob(\Set_{c,3}) \rightarrow 0$.


\small

\end{document}